\newcommand{\TheTitle}{Circular Backbone Colorings: on matching and tree backbones of planar graphs} 
\newcommand{\TheAuthors}{J. Araujo, F. Benevides, A. Cezar and A. Silva}
\title{\TheTitle}
\author{\TheAuthors}
\institute{ParGO Group - Parallelism, Graphs and Optimization\\Universidade Federal do Cear\'a, Fortaleza, Brazil}
\newtheorem{Drule}{Rule}
\newenvironment{claimproof}[1]{\par\noindent\underline{Proof:}\space#1}{\hfill $\blacksquare$}
\newcommand{\Z}{\mathbb{Z}}
\newcommand{\Li}[1]{\mathcal{L}(#1)}
\newcommand{\Lip}[1]{\mathcal{L}'(#1)}
\newcommand{\Lp}{\mathcal{L}'}
\newcommand{\Lcal}{\mathcal{L}}
\newcommand{\cbeta}{\lvert\beta\rvert}
\newcommand{\adj}[1]{\langle #1\rangle}
\begin{document}

\maketitle

\begin{abstract}
%
%
Given a graph $G$, and a spanning subgraph $H$ of $G$, a circular $q$-backbone $k$-coloring of $(G,H)$ is a proper $k$-coloring $c$ of $G$ such that $q\le \lvert c(u)-c(v)\rvert \le k-q$, for every edge $uv\in E(H)$. The circular $q$-backbone chromatic number of $(G,H)$, denoted by $CBC_q(G,H)$, is the minimum integer $k$ for which there exists a circular $q$-backbone $k$-coloring of $(G,H)$.
  The Four Color Theorem implies that whenever $G$ is planar, we have $CBC_2(G,H)\le 8$. It is conjectured that this upper bound can be improved to 7 when $H$ is a tree, and to 6 when $H$ is a matching. In this work, we show that: 
1) if $G$ is planar and has no $C_4$ as subgraph, and $H$ is a linear spanning forest of $G$, then $CBC_2(G,H)\leq 7$; 2) if $G$ is a plane graph having no two 3-faces sharing an edge, and $H$ is a matching of $G$, then $CBC_2(G,H)\leq 6$; and 3) if $G$ is  planar and has no $C_4$ nor $C_5$ as subgraph, and $H$ is a mathing of $G$, then $CBC_2(G,H)\leq 5$. 
These results partially answers questions posed by Broersma, Fujisawa and Yoshimoto (2003), and by Broersma, Fomin and Golovach (2007). It also points towards a positive answer for the Steinberg's Conjecture.
\end{abstract}

\begin{keywords}
	graph coloring, circular backbone coloring, matching, planar graph, Steinberg's conjecture.
\end{keywords}



\section{Introduction}

For basic notions and terminology on Graph Theory, the reader is referred to~\cite{BM.book}. In this text, we only consider simple graphs. 

Let $G = (V,E)$ be a graph. A \emph{(proper) $k$-coloring of $G$} is a function $c:V(G)\rightarrow \{1,\ldots,k\}$ such that $c(u)\neq c(v)$, for every edge $uv\in E(G)$. $G$ is \emph{$k$-colorable} if there exists a $k$-coloring of $G$. The \emph{chromatic number} of $G$, denoted by $\chi(G)$, is the smallest $k$ for which $G$ has a $k$-coloring. $G$ is \emph{$k$-chromatic} if $\chi(G)= k$. The {\sc Vertex Coloring Problem} consists in determining $\chi(G)$, for a given graph $G$.

Among many practical problems that can be modeled using graph coloring, Frequency Assignment problems are perhaps the most famous ones~\cite{AHK+07}. There are several variations of the {\sc Vertex Coloring Problem} that were defined in order to model the specific constraints of the practical applications related to frequency assignment in networks.
The {\sc Backbone Coloring Problem} was defined by Broersma et al.~\cite{Broersma2003, BFGW07} in the context of Frequency Assigment Problems where certain channels of communication were more demanding than others.

Formally, given a graph $G$, a spanning subgraph $H$ of $G$, called the \emph{backbone} of G, and two positive integers $q$ and $k$, a \emph{$q$-backbone $k$-coloring of $(G,H)$} is a $k$-coloring $c$ of $G$ for which $\lvert c(u)-c(v)\rvert \ge q$, for every $uv\in E(H)$. The \emph{$q$-backbone chromatic number of $(G,H)$}, denoted by $BBC_q(G,H)$, is the minimum $k$ for which there exists a $q$-backbone $k$-coloring of $(G,H)$. The {\sc Backbone Coloring Problem} consists in determining $BBC_q(G,H)$. In this work, we focus on the case $q=2$ and thus we usually omit $q$ from the notation.

In their seminal article, Broersma et al. observe that 
\begin{equation}
\label{eq:bbc_chi}
BBC(G,H) \le 2\cdot\chi(G)-1.  
\end{equation}
This can be easily seen by considering an optimal coloring of $G$ that uses only odd colors. Note that, thanks to the Four Color Theorem~\cite{appel1977,appel1977a}, whenever $G$ is a planar graph and $H$ is any spanning subgraph of $G$, we get an upper bound of~7 to the backbone chromatic number of $(G,H)$. However, when $H$ is a spanning tree of $G$, Broersma et al. conjecture that this upper bound is in fact~6, and they show that this would be best possible~\cite{BFGW07}. 
\begin{conjecture}[\cite{BFGW07}]
\label{conj:Broersma}
If $G$ is a planar graph and $T$ is a spanning tree of $G$, then $$BBC(G,T)\le 6.$$
\end{conjecture}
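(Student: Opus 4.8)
The plan is to argue by contradiction through a minimal counterexample combined with a discharging argument, the standard route for theorems of this shape about planar graphs. Take $(G,T)$ with $|V(G)|+|E(G)|$ minimum among counterexamples and fix a plane embedding. Since $G$ has a spanning tree it is connected, and one may further reduce to $G$ being $2$-connected: a cut-vertex splits $G$ into blocks whose colourings (obtained by minimality) can be glued, because the colour set $\{1,\dots,6\}$ carries the symmetry $i\mapsto 7-i$, which preserves both properness and the constraint $|c(u)-c(v)|\ge 2$. The first reducible configurations are easy: if $v$ is a leaf of $T$ with $\deg_G(v)\le 3$, then deleting $v$ leaves a spanning tree of $G-v$, minimality gives a $6$-backbone colouring, and $v$ can be re-inserted since at most five colours of $\{1,\dots,6\}$ are forbidden on it; in particular $\delta(G)\ge 2$. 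For a low-degree vertex $v$ most of whose incident edges lie in $T$, deletion disconnects $T$, so the reduction is instead: delete $v$, add a bounded number of edges $u_iu_j$ among the neighbours of $v$ (possible in the plane, as all neighbours of $v$ lie on a common face after the deletion) to rebuild a spanning tree $T'$ of the smaller planar graph $G'$, colour $(G',T')$ by minimality, and extend.

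The central obstacle is this extension step. If $v$ has degree $2$ with both edges in $T$, a $6$-backbone colouring $c'$ of $(G',T')$ has $|c'(u_1)-c'(u_2)|\ge 2$, and $v$ is colourable unless the intervals $[c'(u_i)-1,c'(u_i)+1]$ together cover $\{1,\dots,6\}$, which happens only for $\{c'(u_1),c'(u_2)\}=\{2,5\}$; and if $v$ has degree $3$ with two or three backbone edges the forbidden set can honestly be all of $\{1,\dots,6\}$. So such vertices are not reducible outright. The way forward is to strengthen the reductions with recolouring (Kempe-chain) arguments: argue that one of $u_1,u_2$, or a vertex of the second neighbourhood, can be recoloured to destroy the bad pattern unless that vertex is itself rigidly constrained, thereby converting ``$v$ is irreducible'' into a strong local structural statement usable in discharging. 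Producing a list of such configurations that is small enough to verify yet rich enough to make the discharging close is, I expect, the hard part, and is precisely where the full conjecture has resisted the partial progress recorded in the abstract.

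For the discharging phase I would use Euler's formula in the form $\sum_{v\in V}(\deg(v)-4)+\sum_{f\in F}(\deg(f)-4)=-8$, assign each vertex $v$ charge $\deg(v)-4$ and each face $f$ charge $\deg(f)-4$, and design rules that move charge from vertices of degree at least $5$ and faces of length at least $5$ towards the degree-$2$ and degree-$3$ vertices and the triangular faces. If the reducibility results force every negatively charged object to have enough positively charged neighbours nearby to be paid off, so that after discharging every vertex and face is non-negative, we contradict the total $-8$, which establishes $BBC(G,T)\le 6$.
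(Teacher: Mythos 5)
The statement you are trying to prove is Conjecture~\ref{conj:Broersma}, which the paper does not prove: it is an open conjecture of Broersma et al.\ that the paper only cites as motivation (the only known direct progress, as the paper notes, is the case where $T$ has diameter at most~4). So there is no proof in the paper to compare against, and your proposal should be judged on its own terms --- and on those terms it is a research plan, not a proof. You yourself identify the fatal gap: the degree-$2$ and degree-$3$ vertices with backbone edges incident to them are \emph{not} reducible (the forbidden set can cover all of $\{1,\dots,6\}$), the Kempe-chain recolouring that is supposed to rescue them is only gestured at, and neither the list of reducible configurations nor the discharging rules are ever written down. The sentence ``producing a list of such configurations \dots is, I expect, the hard part'' is an accurate description of why the conjecture is open; it is not a step in a proof.

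There is also a concrete error earlier in the argument. The reduction to the $2$-connected case does not work as stated: to glue backbone colourings of two blocks at a cut vertex $v$ you must make the two colours assigned to $v$ coincide, and the only bijections of $\{1,\dots,6\}$ that preserve both properness and the constraint $\lvert c(u)-c(v)\rvert\ge 2$ on backbone edges are the identity and $i\mapsto 7-i$. This group is not transitive on $\{1,\dots,6\}$ (e.g.\ you cannot turn a $2$ into a $3$), so in general the two colourings cannot be made to agree at $v$. Any honest attack would have to either avoid this reduction or replace it with a genuinely new idea. As it stands, the proposal establishes nothing beyond the trivially reducible configurations.
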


In the literature, the only result approaching directly this conjecture shows that it holds whenever $T$ has diameter at most~4~\cite{CHRS.13}.

The authors in~\cite{HK.12,HKLT.12,HKLT.14} consider special backbone $k$-colorings where the color space is ``circular'', i.e., it behaves as $\Z/k$. More formally, given a graph $G$, a spanning subgraph $H$ of $G$, and a positive integer $q$, a \emph{circular $q$-backbone $k$-coloring of $(G,H)$} is a function $c:V(G)\rightarrow \{1,\ldots,k\}$ such that $q\le\lvert c(u)-c(v)\rvert\le k-q$, for every $uv\in E(H)$. The \emph{circular $q$-backbone chromatic number} of $(G,H)$, denoted by $CBC_q(G,H)$, is the smallest $k$ for which there exists a circular $q$-backbone $k$-coloring of $(G,H)$. Once more, we quite often omit the index $q$ whenever $q=2$. In order to simplify the notation, we often write CBC-$k$-coloring instead of circular 2-backbone $k$-coloring.

Note that any CBC-$k$-coloring of $(G,H)$ is also a backbone $k$-coloring of $(G,H)$, and, conversely, if $c$ is a backbone $k$-coloring of $(G,H)$, then it can also be seen as a CBC-$(k+1)$-coloring of $(G,H)$. Therefore we get:
\begin{equation}\label{eq:BBC_CBC}BBC(G,H)\le CBC(G,H)\le BBC(G,H) + 1.\end{equation}

Consequently, as far as Conjecture~\ref{conj:Broersma} is not proved to be true, then the following circular version of it is also opened: 

\begin{conjecture}
\label{conj:weakBroersma}
If $G$ is a planar graph and $T$ is a spanning tree of $G$, then $$CBC(G,T)\le 7.$$
\end{conjecture}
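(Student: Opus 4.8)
The plan is to clear away the easy cases first, and then argue by a minimal‑counterexample plus discharging scheme. Note first that Conjecture~\ref{conj:weakBroersma} would be immediate from Conjecture~\ref{conj:Broersma} together with inequality~\eqref{eq:BBC_CBC}, but since the latter is itself open this gives nothing; and the naive attack of taking a proper $4$‑coloring of $G$ (which exists by the Four Color Theorem) and relabelling its four classes by four colors of $\{1,\dots,7\}$ cannot succeed, because the $7$‑cycle has independence number $3$, so no four colors of $\{1,\dots,7\}$ are pairwise at circular distance $\ge 2$ in $\Z/7$. Hence a proof must genuinely use that only the edges of $T$ --- not all of $E(G)$ --- are subject to the circular constraint. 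As a warm‑up, if $G$ is triangle‑free then $\chi(G)\le 3$ by Gr\"otzsch's theorem, and the colors $1,3,5$ are pairwise at circular distance $\ge 2$, so $CBC(G,T)\le 6$ for \emph{every} backbone $T$; thus the real difficulty lies entirely with plane graphs that contain triangles.

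Next, suppose $(G,T)$ is a counterexample with $|V(G)|$ minimum and then $|E(G)|$ minimum; routine arguments give that $G$ is $2$-connected and $T$ a spanning tree. The heart of the proof is a list of \emph{reducible configurations}. The basic reducibility tool is a counting one: if $v$ has degree $d$ with exactly $d_T$ incident edges in $T$, delete $v$, take by minimality a CBC-$7$-coloring $c$ of $(G-v,T-v)$, and try to extend it; each non‑tree neighbour forbids one color for $v$, while each tree neighbour $u$ forbids exactly the three colors $c(u)-1, c(u), c(u)+1$ (mod $7$), so $v$ is reducible whenever $2d_T+d\le 6$. This removes every vertex of degree $\le 2$ (even with both incident edges in $T$), and every vertex of degree $3,4,5$ with sufficiently few incident tree edges. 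The remaining ``tight'' vertices --- degrees $3,4,5$ incident with two or more tree edges, in particular branch vertices of $T$ --- must be treated by recoloring: Kempe‑type swaps along the components of a suitably $2$‑colored subgraph, adapted to the circular palette, together with the acyclicity of $T$, which bounds how many tree edges can crowd a bounded region.

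With the reducible configurations in hand I would run a discharging argument on the plane graph: give each vertex $v$ the initial charge $d(v)-6$ and each face $f$ the charge $2d(f)-6$, so that by Euler's formula the total charge is $-12$; then design rules transferring charge from high‑degree vertices and large faces toward the forbidden small configurations (and possibly also along the edges of $T$), and conclude that every vertex and every face ends with nonnegative charge --- a contradiction.

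The step I expect to be the main obstacle is exactly the analysis of the tight degree‑$3$, $4$ and $5$ vertices incident with several edges of $T$: there the plain extension argument fails, and controlling them seems to demand a delicate case analysis of the local structure of $T$ and of planarity simultaneously. This is precisely where the partial results announced in the abstract buy room by forbidding $C_4$ (or two adjacent $3$-faces, or $C_5$): such hypotheses destroy the densest bad configurations and make the discharging balance. Absent such a restriction I do not see how to finish the count, so a realistic target for this plan is to prove the conjecture under the hypotheses already appearing in the paper, and to push the girth/face restrictions as far as possible, rather than to settle it in full generality.
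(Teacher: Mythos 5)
The statement you are addressing is Conjecture~\ref{conj:weakBroersma}: it is open, the paper offers no proof of it, and it only establishes special cases (Theorems~\ref{thm:noC4C5}--\ref{main:forest}). Your text, by your own admission in its last paragraph, is a programme rather than a proof, so there is a genuine gap: the entire argument hinges on a list of reducible configurations covering the ``tight'' vertices (degree $3$, $4$ or $5$ with two incident tree edges, and branch vertices of $T$), and you neither exhibit that list nor carry out the Kempe-type recoloring you invoke for them. This is not a cosmetic omission. Kempe-chain arguments do not transfer cleanly to circular backbone colorings, because a swap that fixes the proper-coloring constraint can violate the circular distance constraint on a tree edge elsewhere in the chain; and without a concrete set of reducible configurations the discharging rules cannot be written down, let alone verified to leave every vertex and face with nonnegative charge. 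So the plan stalls exactly at the step you flag as the obstacle.

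That said, the parts you do make precise are correct and coincide with the paper's machinery: your counting criterion ``$v$ is reducible whenever $d(v)+2d_T(v)\le 6$'' is exactly Lemma~\ref{lem:totaldegree} with $k=7$; the observation that no four colors of $[7]$ are pairwise at circular distance at least $2$ (so the Four Color Theorem only yields $CBC(G,T)\le 8$) is right; and the Gr\"otzsch reduction of the triangle-free case to $CBC\le 6$ is sound. Your proposed endgame --- proving the conjecture under girth-type restrictions that kill the dense configurations --- is precisely what the paper does, e.g.\ Theorem~\ref{main:forest} forbids $C_4$ and restricts the backbone to a linear forest, and even there the argument requires the more elaborate list-reduction apparatus of Lemma~\ref{2reduction} together with the island-based discharging of Lemma~\ref{mainlemma}, rather than a single vertex-deletion step. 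In short: right framework, same general method as the paper's partial results, but the conjecture itself remains unproved by your proposal.
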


One may observe that a graph $G$ whose chromatic number is $\chi(G)=k$, satisfies $CBC_2(G,H)\leq 2k$, by combining Inequalities~\ref{eq:bbc_chi} and~\ref{eq:BBC_CBC}. Steinberg conjectures that every planar graph $G$ having no $C_4$ or $C_5$ as subgraph satisfies $\chi(G)\leq 3$~\cite{Ste93}. Consequently, one may wonder whether:

\begin{conjecture}
\label{conj:weakSteinberg}
If $G$ is a planar graph having no $C_4$ or $C_5$ as subgraph, then $CBC_2(G,H)\leq 6$, for every backbone $H\subseteq G$.
\end{conjecture}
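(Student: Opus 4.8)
The plan is to reduce the statement to the $3$-colorability of the graph class, that is, to Steinberg's conjecture, since for a circular backbone coloring a proper $3$-coloring is already enough regardless of the backbone. The key starting observation is that, if $\chi(G)\le 3$, one may color $G$ properly using only the three colors $\{1,3,5\}\subseteq\{1,\dots,6\}$; then for every edge $uv$, and in particular for every backbone edge, one has $\lvert c(u)-c(v)\rvert\in\{2,4\}$, so $2\le\lvert c(u)-c(v)\rvert\le 4=6-2$. Thus such a coloring is automatically a CBC-$6$-coloring of $(G,H)$ for every spanning subgraph $H\subseteq G$, and the statement follows at once. So the first (and essentially only) thing to prove is that every planar graph with no $C_4$ and no $C_5$ as a subgraph satisfies $\chi(G)\le 3$.

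Before committing to that route I would check that it cannot be avoided, by analysing the worst case $H=G$. When every edge is a backbone edge, a CBC-$6$-coloring is exactly a proper coloring in which the endpoints of every edge receive colors differing by $2$, $3$, or $4$; equivalently, it is a homomorphism from $G$ into the target graph $T$ on vertex set $\{1,\dots,6\}$ with $i\sim j$ iff $\lvert i-j\rvert\in\{2,3,4\}$. A direct check shows that $T$ is the triangular prism $K_3\,\square\,K_2$, with triangles $\{1,3,5\}$ and $\{2,4,6\}$ joined by the matching $\{14,25,36\}$, so that $\chi(T)=3$. Hence $G\to T$ forces $\chi(G)\le\chi(T)=3$, while conversely $\chi(G)\le 3$ yields a homomorphism into the triangle $\{1,3,5\}\subseteq T$. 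Therefore, already for the single backbone $H=G$, the bound $CBC_2(G,G)\le 6$ holds if and only if $\chi(G)\le 3$: over this graph class the statement is \emph{equivalent} to Steinberg's conjecture, and no cleverness in the choice of colors can dodge the $3$-coloring question.

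Consequently the entire proof rests on establishing $\chi(G)\le 3$ for planar graphs without $4$- and $5$-cycles. The natural route is a discharging argument on a minimal counterexample $G$, which may be taken $2$-connected so that every face is bounded by a cycle and the absence of $C_4,C_5$ means no faces of length $4$ or $5$. First I would isolate reducible configurations (bounds on low-degree vertices and on how the few $3$-faces may be arranged), then assign the usual charges $\deg(v)-4$ to each vertex $v$ and $\ell(f)-4$ to each face $f$, whose total is negative by Euler's formula, and finally design redistribution rules making every element end nonnegative, a contradiction. The hard part will be precisely this last step: forbidding $4$- and $5$-faces removes the most convenient reservoirs of positive face-charge, so one must extract enough charge from triangles and large faces alone to compensate the deficient vertices, and it is exactly this scarcity of usable charge that has kept Steinberg's conjecture open. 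I would therefore expect the direct three-coloring route to be the genuine bottleneck, whereas the sparse-backbone cases (matching or linear forest) can instead be attacked by exploiting the fewness of backbone edges and thus bypass settling $3$-colorability outright.
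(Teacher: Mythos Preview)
Your reduction is correct and essentially coincides with what the paper itself observes: the sentence immediately following the conjecture in the paper reads ``Notice that Conjecture~\ref{conj:weakSteinberg} is in fact equivalent to Steinberg's Conjecture when $H=G$.'' Your computation of the target graph $T$ on $\{1,\dots,6\}$ as the triangular prism, together with $\chi(T)=3$, makes that equivalence explicit and is a nice way to see it. However, there is no ``paper's own proof'' to compare against: the statement is listed as a \emph{conjecture}, not a theorem, and the paper proves only partial cases (Theorems~\ref{thm:noC4C5}, \ref{thm:noadjacentC3}, \ref{main:forest}) for restricted backbones.

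The genuine gap in your plan is therefore not a missing lemma but the target itself. You propose to finish by proving Steinberg's Conjecture via discharging on a minimal counterexample; this cannot work, because Steinberg's Conjecture is now known to be \emph{false}: Cohen-Addad, Hebdige, Kr\'a\v{l}, Li and Salgado (2017) constructed planar graphs with no $C_4$ and no $C_5$ that are not $3$-colourable. By your own equivalence $CBC_2(G,G)\le 6 \iff \chi(G)\le 3$, any such graph $G$ together with the backbone $H=G$ violates the bound, so Conjecture~\ref{conj:weakSteinberg} is false as stated. In particular, no discharging scheme of the kind you outline can terminate with all charges nonnegative, and the right conclusion is that the conjecture should be amended (for instance by restricting the backbone, as the paper's theorems do) rather than proved.
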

Notice that Conjecture~\ref{conj:weakSteinberg} is in fact equivalent to Steinberg's Conjecture when $H=G$.

In this paper, we prove particular cases of Conjectures~\ref{conj:weakBroersma} and~\ref{conj:weakSteinberg}.

\subsection{Matching Backbones} It is known that if $G$ is a 3-colorable graph and $M$ is a matching of $G$, then $BBC(G,M)\le 4$~\cite{BFY.03}. Combining this result with Inequality \ref{eq:BBC_CBC}, we observe that if Steinberg's Conjecture is true, then $CBC(G,M)\le 5$, whenever $G$ is a planar graph without cycles of length~4 or~5, and $M$ is a matching of $G$. We first prove that this bound holds, giving yet more evidence to the validity of Steinberg's Conjecture:

\begin{theorem}\label{thm:noC4C5}
If $G$ is a planar graph without cycles of length~4 or~5 as subgraph, and $M$ is a matching of $G$, then $CBC(G,M)\le 5$.
\end{theorem}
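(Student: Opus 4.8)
The plan is to use the discharging method, exploiting the two forbidden cycle lengths. First I would reduce to a minimal counterexample $(G,M)$: a plane graph with no $C_4$ or $C_5$, with $M$ a matching, such that $CBC(G,M)>5$ but every smaller pair (fewer vertices, or the same vertex set with $M$ replaced by a submatching) admits a CBC-$5$-coloring. Since $G$ has no $4$- or $5$-cycles, every face other than a triangle has length at least $6$; combined with planarity and Euler's formula this forces $G$ to be sparse (average degree below $4$, in fact $|E(G)| < 3|V(G)|/2$ is not quite what we get, but something in that spirit: with girth considerations the bound is roughly $|E| \le \frac{5}{2}(|V|-2)$, weak enough that there is a vertex of small degree). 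So the core task is a \emph{reducible configuration} analysis: I would list small configurations (a vertex of degree $\le 2$; a vertex of degree $3$ in various matched/unmatched situations; adjacent low-degree vertices; a triangle with two degree-$3$ vertices; etc.) and show each cannot occur in a minimal counterexample, then run a discharging argument with initial charge $\deg(v)-4$ on vertices and $\ell(f)-4$ on faces (total $-8$ by Euler) to derive a contradiction.

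The heart of each reducibility argument is a local recoloring/extension lemma. Given a minimal counterexample, delete a low-degree vertex $v$ (or contract/delete an appropriate small subgraph, handling the matching edge at $v$ carefully — if $v$ is matched to $u$, one typically deletes both or treats $uv$ as a constraint to be restored), color what remains by minimality, and then argue the colors forbidden at $v$ do not exhaust $\{1,\ldots,5\}$. The key counting fact is that in a circular $2$-backbone $5$-coloring, a proper edge forbids $1$ color at a neighbor, while a matching edge $uv$ forbids $3$ colors: if $c(u)=i$ then $c(v)\notin\{i-1,i,i+1\}$ (indices mod $5$), i.e. $c(v)\in\{i-2,i+2\}=\{i+2,i+3\}$, leaving exactly $2$ admissible colors. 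So a matched neighbor is expensive (cost $3$) but an ordinary neighbor is cheap (cost $1$), and since $M$ is a matching each vertex has at most one matched neighbor. Thus a vertex $v$ of degree $d$ with its at-most-one matched neighbor already colored sees at most $3 + (d-1) = d+2$ forbidden colors, so $d \le 2$ is immediately extendable; for $d=3,4$ one needs the neighbors' colors to collide or to use the freedom that when $v$ itself is matched (to some $u$), extending to $v$ still leaves $2$ choices after $u$'s constraint, and one of them must avoid the $\le d-1$ other neighbors when $d$ is small — so the genuinely delicate cases are degree-$3$ and degree-$4$ vertices, possibly inside triangles, whose neighbors are pairwise non-adjacent and pairwise colored differently and all matched to vertices forcing bad colors. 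These are the configurations that must be shown nonexistent (or recolorable by a short alternating-path / Kempe-chain-type swap) so that discharging can finish.

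The main obstacle I anticipate is the degree-$3$ and degree-$4$ analysis near triangles: absorbing the triangular faces (which carry deficient charge $-1$ each) requires that $3$-faces be incident to enough high-degree vertices, and to prove that one must rule out, e.g., a $3$-face with all three vertices of degree $\le 4$, or two $3$-faces glued along an edge (the latter is automatically impossible here since $G$ has no $C_4$ — a nice bonus of forbidding $C_4$, and indeed the reason the simpler Theorem~\ref{thm:noC4C5} on bound $5$ is easier than the plane-graph bound-$6$ statement). Concretely, I expect the discharging rules to send charge from large-degree vertices and from the long ($\ge 6$) faces to the small-degree vertices and the $3$-faces, and the verification that every vertex and face ends with nonnegative charge will hinge on the reducibility of exactly those triangle-adjacent low-degree configurations. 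A secondary nuisance is bookkeeping the ``mod $5$'' circular arithmetic consistently and making sure that when we restore a deleted matching edge the two endpoints land in a compatible pair $\{i+2,i+3\}$ — this is routine but must be done with care, and it is the place where the bound $5$ (rather than $6$) is tight, since with only $2$ admissible colors per matched neighbor there is essentially no slack.
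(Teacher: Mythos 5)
Your overall strategy coincides with the paper's: take a minimal counterexample, establish reducible low-degree configurations by deleting vertices and counting forbidden colors (a backbone neighbor forbids the $3$ colors $\adj{c}$ in $\Z/5$, an ordinary neighbor forbids $1$), and finish by discharging using the fact that no two $3$-faces share an edge. Your charge assignment ($d(v)-4$ on vertices, $\lvert f\rvert-4$ on faces, total $-8$) differs cosmetically from the paper's, which instead charges $d(v)-3$ on vertices and $-\frac{3}{2}$ on $3$-faces and contradicts the edge-count bound of Lemma~\ref{lem:AHS.15}; either bookkeeping can be made to work.

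The genuine gap is that you defer exactly the step on which the whole argument turns. Lemma~\ref{lem:totaldegree} gives $\delta(G)\ge 3$ and that every vertex of degree at most $4$ is matched, but a matched degree-$3$ vertex carries an unavoidable deficit (in your normalization, charge $-1$; in the paper's, it must still fund its incident $3$-face), and the only source that can pay for it is its matching partner. So one must prove the sharper structural fact of Lemma~\ref{lem:minDegree}: if $d(u)=3$ and $uw\in M$, then $d(w)\ge 5$. Your proposal never isolates this configuration, and your sketch of the degree-$3$/degree-$4$ analysis (``neighbors' colors collide,'' Kempe-chain swaps) does not produce it. The actual reduction is a clean two-vertex deletion: if $d(w)\le 4$, remove both $u$ and $w$, take a CBC-$5$-coloring $\psi$ of the rest by minimality, and observe $a_\psi(u)\ge 5-2=3$ and $a_\psi(w)\ge 5-3=2$; since two distinct colors $c\neq c'$ have $\adj{c}\neq\adj{c'}$, some $c\in A_\psi(w)$ leaves $A_\psi(u)\setminus\adj{c}\neq\emptyset$, so $\psi$ extends. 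No recoloring or Kempe argument is needed anywhere. Conversely, the obstacles you anticipate are not real ones: a $3$-face with all three vertices of degree $\le 4$ need not be excluded (each such vertex lies on at most $\lfloor d/2\rfloor$ triangles and is subsidized by its high-degree partner via Rule~\ref{noC4C5R1}, then pays $\frac{1}{2}$ to each incident triangle via Rule~\ref{noC4C5R2}). Without the partner-degree lemma your discharging cannot close; with it, the rest is routine.
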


In~\cite{BFY.03} the authors prove that $BBC(G,M)\le 6$, whenever $G$ is a planar graph and $M$ is a matching. They also ask whether $BBC(G,M)\le 5$ holds, and whether $BBC(G,M)\le 6$ can be proved without using the Four Color Theorem. We partially answer both questions by showing that:

\begin{theorem}\label{thm:noadjacentC3}
If $G$ is a plane graph with no two faces of degree~3 that share an edge, and $M$ is a matching in $G$, then $CBC(G,M)\le 6$.
\end{theorem}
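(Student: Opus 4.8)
The plan is to take a vertex-minimal counterexample and combine a few reducibility arguments with a short discharging argument; the hypothesis that no two $3$-faces share an edge will enter only through the fact that at most $\lfloor d/2\rfloor$ of the faces around a vertex of degree $d$ can be triangles. So let $(G,M)$ be a counterexample with $|V(G)|$ minimum. First I would record the easy reductions. We may assume $G$ is $2$-connected: the legal color pairs on an edge of $M$ in a CBC-$6$-coloring are exactly the pairs at cyclic distance $2$ or $3$ in $\Z/6$, i.e.\ the edges of the complement of $C_6$ (the triangular prism), which is vertex-transitive; hence at a cut vertex $x$ one colors the two sides (each is a smaller instance of the same type, since deleting vertices creates neither a new $3$-face nor a new pair of edge-sharing $3$-faces), composes the coloring of one side with a permutation of $\{1,\dots,6\}$ preserving properness and the prism so that $x$ keeps its color, and glues. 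Also $\delta(G)\ge 4$: a vertex $v$ of degree at most $3$ is reducible, because in a CBC-$6$-coloring of $G-v$ the matching edge at $v$ (if any) forbids exactly three colors for $v$ --- namely $c(u)$ and its two cyclic neighbors in $\Z/6$ --- and each ordinary edge forbids one more, for at most $3+2=5<6$ forbidden colors. The same count shows that an $M$-unsaturated vertex of degree $4$ is reducible, so every degree-$4$ vertex is matched.

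\textbf{The key configuration.} Next I would prove the crucial structural fact: in the minimal counterexample, every degree-$4$ vertex $v$ is matched to a vertex $u$ with $d(u)\ge 6$. Color $G-v$ and write $N(v)=\{u,w_1,w_2,w_3\}$. If $v$ cannot be colored, then $\{c(u)-1,c(u),c(u)+1\}$ (cyclically) together with $c(w_1),c(w_2),c(w_3)$ must exhaust $\Z/6$, which forces $\{c(w_1),c(w_2),c(w_3)\}$ to be precisely the block of three colors opposite to $c(u)$. But in $G-v$ the vertex $u$ is $M$-unsaturated, so if $d(u)\le 5$ then $u$ has at most $d(u)-1\le 4$ neighbors in $G-v$ and can be recolored with some $c'\ne c(u)$; the colors of $w_1,w_2,w_3$ are unchanged, and since a block of three consecutive elements of $\Z/6$ is determined by its complementary block, $v$ then has an available color --- a contradiction. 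Hence $d(u)\ge 6$.

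\textbf{Discharging.} Give each vertex $x$ charge $d(x)-4$ and each face $F$ charge $d(F)-4$; by Euler's formula the total charge is $-8$. I would use two rules: (R1) every $3$-face receives $\tfrac13$ from each incident vertex; (R2) every degree-$4$ vertex receives $\tfrac23$ from its $M$-partner (which has degree $\ge 6$ by the previous step). Since no two $3$-faces share an edge, no two consecutive faces around a vertex are triangles, so a vertex of degree $d$ lies on at most $\lfloor d/2\rfloor$ triangles. Then a $3$-face ends with $-1+3\cdot\tfrac13=0$; a face of degree $\ge 4$ is untouched; a degree-$4$ vertex ends with $0-2\cdot\tfrac13+\tfrac23=0$; a degree-$5$ vertex is never the $M$-partner of a degree-$4$ vertex, so it only loses at most $2\cdot\tfrac13$ and ends with at least $\tfrac13$; and a vertex of degree $d\ge 6$ ends with at least $(d-4)-\lfloor d/2\rfloor\cdot\tfrac13-\tfrac23\ge 0$. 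Every final charge is nonnegative, contradicting the total $-8$, which proves Theorem~\ref{thm:noadjacentC3}.

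\textbf{Main difficulty.} I expect the real work to be the key configuration: the discharging closes only because degree-$4$ vertices must be matched to vertices of degree at least $6$, and the delicate point is to obtain this from a single clean recoloring of the matched neighbor rather than from a long case analysis on the three remaining neighbors $w_1,w_2,w_3$. A secondary, more routine, point is verifying the connectivity reductions and that every operation used (vertex deletion, restriction of $M$) stays inside the class of plane graphs with no two edge-sharing $3$-faces.
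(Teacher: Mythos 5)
Your proof is correct and its first half coincides with the paper's: you establish exactly the same reducible configurations, namely $\delta(G)\ge 4$, every degree-$4$ vertex is $M$-saturated, and its partner has degree at least $6$ (the paper's Lemma~\ref{lem:degree_mat}). Your derivation of the last fact is a nice variant: the paper deletes both endpoints of the matching edge and colors them simultaneously from lists of sizes $3$ and $2$, whereas you delete only $v$ and, in the bad case where $c(w_1),c(w_2),c(w_3)$ form the block complementary to $\adj{c(u)}$, you recolor the now $M$-unsaturated vertex $u$; both are sound. Where you genuinely diverge is the discharging. The paper works against a bespoke counting inequality (Lemma~\ref{mainlemma_mat}) whose proof and whose discharging rules require the machinery of ``bad islands'' (acyclic components of $4$-faces in the dual) carrying negative charge that must be paid back by incident $3$-faces. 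You instead use the standard Euler-based charging $\mu(x)=d(x)-4$ on vertices and faces with total $-8$, and you close it with the single geometric observation that the hypothesis forbids two consecutive triangular corners around a vertex, so a degree-$d$ vertex lies on at most $\lfloor d/2\rfloor$ triangles. Your version is shorter and avoids the island bookkeeping entirely; the arithmetic checks out in every case (triangles end at $0$, degree-$4$ vertices at $0$, degree-$5$ at $\ge 1/3$, degree-$d\ge 6$ at $(d-4)-\frac{1}{3}\lfloor d/2\rfloor-\frac{2}{3}\ge\frac{1}{3}$). What the paper's heavier approach buys is reusability: the same island technique is recycled for the linear-forest backbone in Section~\ref{sec:forest}, where the corner argument alone would not suffice.

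One blemish: your reduction to the $2$-connected case is both unnecessary and not quite right as stated. Splitting at a cut vertex $x$ does not obviously keep each side in the class --- the face of a side that absorbs the other side can become a new $3$-face sharing an edge with an old one (e.g.\ a triangle through $x$ with the other block embedded inside it and a $3$-face attached externally along one of its edges), so the sides need not be ``smaller instances of the same type.'' Fortunately you do not need $2$-connectivity: the bound of $\lfloor d/2\rfloor$ triangles at a vertex follows from the rotation at $v$ alone (a $3$-face of a simple plane graph occupies exactly one corner at $v$, and two $3$-faces at consecutive corners would share an edge), and Euler's formula only needs connectivity, which a minimal counterexample has. Simply delete that reduction.
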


Although our result restricts the class of graphs when compared to the result presented in~\cite{BFY.03}, it is stronger on this restricted class since we deal with \emph{circular} backbone colorings instead. We mention that our result points to a positive answer to the question about whether $BBC(G,M)\le 5$, and that our proof does not use the Four Color Theorem.

\subsection{Linear Forest Backbones} Finally, we also study more general backbones. A forest is called \emph{linear} if its components are paths.

In~\cite{AHS.15}, the authors investigate $CBC(G,F)$ in the light of Steinberg's Conjecture~\cite{Ste93}. Araujo et al. prove that if $G$ is a planar graph with no cycles of length 4 or 5, then $CBC(G,F)\le 7$ whenever $F$ is a spanning forest of $G$, and that $CBC(G,F)\le 6$, whenever $F$ is a spanning linear forest of $G$~\cite{AHS.15}.  Observe that their results partially solve Conjectures~\ref{conj:weakBroersma} and~\ref{conj:weakSteinberg}. 

The last result we present in this work is similar to theirs by considering planar graphs with no cycles of length~4 and linear forests as backbones.

\begin{theorem}\label{main:forest}
If $G$ is a planar graph without cycles of length 4 as subgraph, and $F$ is a linear spanning forest of $G$, then $CBC(G,F)\le 7$. 
\end{theorem}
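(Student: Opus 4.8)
The plan is a proof by contradiction using the discharging method. Let $(G,F)$ be a counterexample minimizing $|V(G)|+|E(G)|$. A first observation is that a CBC-$7$-coloring takes values in $\Z/7$ and remains a CBC-$7$-coloring after composition with a rotation $x\mapsto x+t$; since these rotations act transitively on $\Z/7$, CBC-$7$-colorings of two graphs meeting in a single vertex can always be aligned and glued along that vertex. Hence we may assume $G$ is $2$-connected, so $G$ is simple and has no $4$-face (a $4$-face would be a $C_4$): every face is a triangle or has length at least $5$. Assigning charge $ch(x)=d(x)-4$ to each vertex $x$ (with $d(x)$ its degree) and $ch(f)=d(f)-4$ to each face $f$ (with $d(f)$ its length), Euler's formula gives $\sum_x ch(x)=-8$; the only objects with negative charge are the vertices of degree $3$ (charge $-1$; degree $\le 2$ will be excluded) and the triangular faces (charge $-1$), while degree-$4$ vertices are neutral.

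The engine is a catalogue of reducible configurations obtained by deleting a vertex, or a small set of vertices, and re-extending a coloring given by minimality. The key count is: in any CBC-$7$-coloring of $G-v$ in which all neighbours of $v$ survive and are coloured, at most $d_G(v)+2d_F(v)$ colours are forbidden at $v$, where $d_F(v)\in\{0,1,2\}$ is the number of backbone edges at $v$ (a backbone edge forbids the colour of its endpoint together with the two circularly adjacent colours, an ordinary edge forbids only one colour). So $v$ extends whenever $d_G(v)+2d_F(v)\le 6$. This forces $\delta(G)\ge 3$, makes every vertex of degree $3$ or $4$ an internal vertex of an $F$-path (so $d_F=2$), every endpoint of an $F$-path of degree at least $5$, and every vertex isolated in $F$ of degree at least $7$. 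The remaining configurations, all clustered around a degree-$3$ internal vertex $v$ with backbone neighbours $u_1,u_2$ and one ordinary neighbour $w$, have to be pried open by deleting $v$ together with $w$ (or with a neighbour of $u_1$ or $u_2$), re-colouring the smaller graph, and using the rotation freedom to avoid the single obstruction: that $c(u_1)$ and $c(u_2)$ land at circular distance exactly $3$ with $c(w)$ the one remaining colour. Deciding precisely which neighbourhoods of such a $v$, and which triangular faces, admit a reduction is the technical heart.

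With the forbidden configurations fixed, one designs discharging rules that send charge from vertices of degree at least $5$ and faces of length at least $5$ towards the degree-$3$ vertices and the triangular faces. The $C_4$-freeness is used through the fact that any two vertices have at most one common neighbour: every edge lies in at most one triangle, a vertex of degree $d$ lies in at most $\lfloor d/2\rfloor$ triangles, so a degree-$3$ vertex is incident to at most one $3$-face and hence to at least two faces of length $\ge 5$. Typical rules would have each triangular face take $\tfrac13$ from each incident vertex, each face of length $\ell\ge 5$ distribute its surplus $\ell-4$ among its incident vertices (favouring the degree-$3$ ones), and each degree-$3$ vertex additionally collect charge from its ordinary neighbour $w$ when $w$ is large; the absence of the reducible configurations is exactly what guarantees that these donors exist and that no vertex or face is asked for more than it has. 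A final case analysis checks $ch(x)\ge 0$ for every vertex and face after discharging, contradicting $\sum_x ch(x)=-8$.

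I expect the main obstacle to be precisely the degree-$3$ vertex internal to an $F$-path: it carries the full deficit $-1$, yet has only one ordinary neighbour and — because $5$-cycles are \emph{allowed} — may be incident only to $3$-faces and $5$-faces, whose surplus is small. One must show that every way in which such a vertex can be simultaneously poor in its ordinary neighbour and in its incident faces is reducible, while also exploiting that $F$ is \emph{linear} (each vertex has at most two backbone edges) rather than an arbitrary spanning forest. This is the point at which a colour is lost compared with the $C_4$-and-$C_5$-free setting, where every face is a triangle or has length at least $6$ and the bound improves to $6$: here the thinner face surplus is compensated exactly by the seventh colour.
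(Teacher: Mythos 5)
Your overall framework (minimal counterexample, the extendability count $d_G(v)+2d_F(v)\le 6$, discharging against Euler's formula) is the right one, and you correctly identify the degree-$3$ vertex internal to an $F$-path as the crux. But the two places you yourself flag as ``the technical heart'' are exactly where the proof lives, and the concrete choices you do commit to point in a direction that does not close. First, reducibility: you propose deleting a degree-$3$ vertex $v$ together with one neighbour and re-extending, using rotation freedom. This is not enough. One has to forbid configurations spanning an entire \emph{heavy subpath} of $F$ (a maximal subpath all of whose vertices have degree at most $5$ in $G$): for instance, that such a subpath containing a degree-$3$ vertex has every other vertex of degree exactly $5$, that it contains at most two degree-$4$ vertices, and that a subpath containing a leaf of $F$ has all vertices of degree $5$. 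These subpaths can be arbitrarily long, so the reduction must delete the whole path, colour the rest by minimality, and re-extend vertex by vertex along the path; this requires an iterated list-colouring reduction (Lemma~\ref{2reduction}) with a delicate side condition controlling which \emph{pairs} of colours may be removed from the next vertex's list, so that the removed colours do not dominate the circular space. A bounded local deletion cannot establish these path-length statements.

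Second, the discharging. With vertex charge $d(v)-4$ and face charge $d(f)-4$, a degree-$3$ internal vertex carries $-1$, may also feed an incident triangle, its backbone neighbours on the heavy subpath have degree only $5$ (surplus $1$ each, possibly owed elsewhere), its one ordinary neighbour $w$ is not forced to have large degree by any of the reducibility lemmas, and each incident $5$-face has surplus only $1$ to split among five vertices, several of which can be poor at once. You name this obstacle but do not resolve it, and I do not see a face-based rule set that balances it. The actual proof sidesteps face-by-face accounting: it first proves a global inequality $\lvert E(G)\rvert\le 2\lvert V(G)\rvert-4+\gamma/3$, where $\gamma$ counts \emph{bad islands} (acyclic components of the dual minus the $3$-faces, all of whose faces have degree $5$), then assigns charge $d(v)-4$ to vertices and $-2/3$ to each bad island, and discharges along heavy subpaths, paying a poor vertex from the far ends of its subpath and from the islands straddling it, with several further lemmas ruling out the ``defective'' paths where this fails. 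Without either that global structure or a completely specified and verified rule set of your own, what you have is a plan with the hard parts still open rather than a proof.
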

Although in our proof we can consider graphs that have $C_5$ as subgraph, we need an extra color than in the previous result in the literature. However, this was expected since our efforts were done towards an answer to Conjecture~\ref{conj:weakBroersma}.

The remainder of this text is organized as follows: in Section~\ref{sec:defsandlemma}, we introduce basic notation and results. Then, we prove Theorems~\ref{thm:noC4C5},~\ref{thm:noadjacentC3} and~\ref{main:forest} in Sections~\ref{sec:matchings_C4C5},~\ref{sec:noadjacentC3} and~\ref{sec:forest}, respectively. 


\section{Preliminaries}
\label{sec:defsandlemma}
For the basic definitions about simple graphs and planar graphs, we refer the reader once again to~\cite{BM.book}.

Given a statement $P$, and a partially ordered set $(\mathcal{S}, \preceq)$, we denote by $P(\mathcal{S})$ the set $\{S\in \mathcal{S}\mid P \mbox{ holds for }S\}$. And we say that $S\in \mathcal{S}$ is a \emph{minimal counterexample for $P$} if $S\notin P(\mathcal{S})$, and $S'\in P(\mathcal{S})$ for every $S'\in \mathcal{S}$ such that $S'\prec S$. In our proofs, we consider minimal counterexamples to our theorems. For this, we consider a pair $(G',H')$ to be smaller than a pair $(G,H)$ if $G'\subset G$ and $H'\subseteq H$; in this case we say that $(G',H')$ is a \emph{subpair} of $(G,H)$.

In what follows, given a minimal counterexample $(G,H)$ to one of our theorems, we get a contradiction by being successful in extending a partial CBC-$k$-coloring of $(G',H')$ to $(G,H)$, where $(G',H')$ is a subpair of $(G,H)$. The following lemma presented in~\cite{AHS.15} will be useful. It can be easily proved by considering a CBC-$k$-coloring of $(G-u,H-u)$ and observing that it can be extended to a CBC-$k$-coloring of $(G,H)$.

\begin{lemma}[\cite{AHS.15}]\label{lem:totaldegree}
If $(G,H)$ is minimal such that $CBC(G,H)>k$, then, for every $u\in V(G)$, we have that $d_G(u)+2d_H(u)\ge k$, .
\end{lemma}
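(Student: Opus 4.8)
Proof plan for Lemma~\ref{lem:totaldegree} (the last statement in the excerpt).

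The plan is to argue by minimality. Let $(G,H)$ be minimal with $CBC(G,H)>k$, and suppose for contradiction that there is a vertex $u\in V(G)$ with $d_G(u)+2d_H(u)<k$, i.e. $d_G(u)+2d_H(u)\le k-1$. Consider the subpair $(G',H')=(G-u,\,H-u)$, which is strictly smaller than $(G,H)$ in the ordering defined just above (we delete $u$ and all incident edges). By minimality of $(G,H)$ we have $CBC(G',H')\le k$, so fix a CBC-$k$-coloring $c'$ of $(G',H')$. The goal is to extend $c'$ to a CBC-$k$-coloring $c$ of $(G,H)$ by choosing an appropriate color in $\{1,\dots,k\}$ for $u$; finding such a color contradicts $CBC(G,H)>k$.

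The heart of the argument is a counting of forbidden colors. A candidate color $a\in\{1,\dots,k\}$ for $u$ is forbidden for one of two reasons: (i) some neighbour $v$ of $u$ in $G$ (but joined to $u$ only by an edge of $G$, not of $H$) already has $c'(v)=a$ — this rules out at most one color per such edge; or (ii) some neighbour $w$ of $u$ in $H$ forces $a$ to avoid the ``short'' circular distances, i.e. we need $2\le |a-c'(w)|\le k-2$, equivalently $a\notin\{c'(w)-1,c'(w),c'(w)+1\}$ read modulo $k$ — this rules out at most three colors per edge of $H$ (and note an $H$-edge is also a $G$-edge, but the three colors it forbids already include $c'(w)$, so we do not double count). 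Hence the total number of colors forbidden for $u$ is at most $(d_G(u)-d_H(u))\cdot 1 + d_H(u)\cdot 3 = d_G(u)+2d_H(u)\le k-1<k$. Therefore at least one color in $\{1,\dots,k\}$ remains available for $u$; assigning it yields the desired extension $c$, a contradiction. This establishes $d_G(u)+2d_H(u)\ge k$ for every $u$.

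The only point requiring a little care — and the closest thing to an obstacle — is the bookkeeping of the circular constraint: one must check that the set $\{c'(w)-1,c'(w),c'(w)+1\}$ taken modulo $k$ genuinely has size at most $3$ and is precisely the set of colors $a$ violating $2\le|a-c'(w)|\le k-2$ when colors are interpreted in the circular space $\Z/k$ (the boundary cases $c'(w)=1$ or $c'(w)=k$, and small values of $k$, should be noted but cause no trouble since we only need an upper bound on the count). Once this is in place, the inequality $d_G(u)+2d_H(u)\le k-1$ gives strictly fewer than $k$ forbidden colors, so a valid color exists and the extension goes through. No case analysis on the structure of $G$ is needed; the lemma is purely a local degree bound obtained from the greedy extension of a coloring of $(G-u,H-u)$.
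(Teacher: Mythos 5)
Your proposal is correct and follows exactly the route the paper indicates (the paper only sketches this lemma, citing \cite{AHS.15}, as "consider a CBC-$k$-coloring of $(G-u,H-u)$ and observe that it extends"): delete $u$, invoke minimality, and count at most one forbidden color per non-backbone neighbour and at most three per backbone neighbour, giving at most $d_G(u)+2d_H(u)\le k-1$ forbidden colors. Your accounting of the circular set $\{c'(w)-1,c'(w),c'(w)+1\}$ modulo $k$ matches the paper's definition of $\adj{c}$, so the extension argument goes through as stated.
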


The general technique used to prove the above lemma is also extensively applied in the remainder of the text. Because of this, we introduce the following definitions and notation. 

Given a positive integer $k$, we denote the set $\{1,\cdots,k\}$ by $[k]$, and given $c\in [k]$, we denote by $\adj{c}$ the set $\{d\in [k]\mid \lvert c-d\rvert \le 1\mbox{ or } \lvert c-d\rvert \ge k-1\}$ (the colors adjacent to $c$ in the circular space $[k]$). Also, we denote the power set of $[k]$ by $2^{[k]}$.
Given a pair $(G,H)$, a subgraph $G'\subset G$, and a CBC-$k$-coloring $\psi$ of $(G',H[V(G')])$, we define, for each $u\in V(G)\setminus V(G')$, the set of \emph{available colors for $u$ in $\psi$}:
\[A_\psi(u) = [k]\setminus (\psi(N_{G'}(u))\cup\{\adj{\psi(v)}\mid v\in N_{H'}(u)\}).\]
Also, we denote $\lvert A_\psi(u)\rvert$ by $a_\psi(u)$.


\section{Proof of Theorem~\ref{thm:noC4C5}}\label{sec:matchings_C4C5}

In order to prove Theorem~\ref{thm:noC4C5}, we need the following lemma, proved in~\cite{AHS.15}.

\begin{lemma}[\cite{AHS.15}]\label{lem:AHS.15}
Let $G$ be a plane graph without cycles of length~4 or~5, $G\neq K_3$, and let $n$ and $f_3$ denote the number of vertices of $G$ and number of faces of degree~3 in $G$, respectively. Then, \[\sum_{v\in V(G)}d(v)\le 3n+\frac{3f_3}{2}-6.\]
\end{lemma}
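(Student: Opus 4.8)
The plan is to reformulate the claim in terms of edges and attack it with Euler's formula together with a double count of face degrees. Since $\sum_{v\in V(G)}d(v)=2m$, where $m=|E(G)|$, the inequality is equivalent to $2m\le 3n+\tfrac{3}{2}f_3-6$. I would first reduce to the case where $G$ is connected (handling isolated vertices and extra components separately, as they only slacken the bound), and work inside a fixed plane embedding with face set of size $f$. Euler's formula then gives $n-m+f=2$, and summing the lengths of all facial walks yields $\sum_{F}d(F)=2m$, where the sum ranges over the faces $F$ of $G$. I would also record that a facial walk of length $3$ in a simple graph is forced to be a triangle, so that $f_3$ counts exactly the faces of degree $3$.

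The heart of the argument is the following dichotomy: \emph{every face of $G$ has degree $3$ or degree at least $6$}. Indeed, a face whose boundary is a cycle of length $4$ or $5$ would exhibit a $C_4$ or a $C_5$ in $G$, which is forbidden; so no face bounded by a cycle can have degree $4$ or $5$. The delicate point, and what I expect to be the main obstacle, is to rule out faces of degree $4$ or $5$ whose \emph{facial walk is not a cycle} (as may happen around bridges or cut vertices): such a walk of length at most $5$ must repeat a vertex, and I would argue that the corresponding local configuration either already produces a short cycle excluded by the hypothesis, or forces a bridge/pendant structure that contradicts the standing assumptions on $G$. This is precisely where the reduction to a suitably connected $G$, and the exclusion of degenerate inputs such as $K_3$, is genuinely used. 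Once the dichotomy holds, the faces split into the $f_3$ triangular ones and the $f-f_3$ faces of degree at least $6$.

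With the dichotomy in hand the computation is routine. From $\sum_F d(F)=2m$ I get $2m\ge 3f_3+6(f-f_3)=6f-3f_3$, hence $6f\le 2m+3f_3$, that is $f\le \tfrac{m}{3}+\tfrac{f_3}{2}$. Substituting $f=2-n+m$ from Euler's formula gives $2-n+m\le \tfrac{m}{3}+\tfrac{f_3}{2}$, which rearranges to $\tfrac{2m}{3}\le n-2+\tfrac{f_3}{2}$, and finally to $2m\le 3n-6+\tfrac{3}{2}f_3$. Recalling $\sum_{v}d(v)=2m$, this is exactly the asserted bound $\sum_{v\in V(G)}d(v)\le 3n+\tfrac{3f_3}{2}-6$. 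I would close by verifying the extremal/degenerate inputs, checking in particular that the reductions to a connected, cycle-bounded embedding introduce no loss and that the excluded case $K_3$ is the sole obstruction to the clean face dichotomy, so that the hypotheses $G\neq K_3$ and the absence of $C_4,C_5$ are the only ingredients needed beyond Euler's formula.
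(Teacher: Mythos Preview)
The paper does not supply its own proof of this lemma; it is quoted verbatim from~\cite{AHS.15}, so there is nothing in the paper to compare your argument against. Your overall line---rewrite the bound as $2m\le 3n+\tfrac{3}{2}f_3-6$, combine Euler's formula with $\sum_F d(F)=2m$, and squeeze via a lower bound on face degrees---is the standard one, and the arithmetic from the dichotomy ``every face has degree $3$ or at least $6$'' to the conclusion is correct.

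The genuine gap is precisely the step you single out as ``the main obstacle'', and it cannot be closed under the hypotheses as stated. Take $G=P_3$ (a path on three vertices): it is plane, contains no $C_4$ or $C_5$, and is not $K_3$, yet its unique face has degree~$4$, and the asserted inequality itself fails ($2m=4>3=3n-6$). A triangle with one pendant edge likewise yields a face of degree~$5$ and violates the bound. So the ``bridge/pendant structure'' you hope will ``contradict the standing assumptions on $G$'' does not contradict anything written down; the lemma as quoted is missing a hypothesis. In the paper's application one always has $\delta(G)\ge 3$ (Lemma~\ref{lem:minDegree}), and presumably~\cite{AHS.15} carries an explicit assumption of this sort. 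If you add $\delta(G)\ge 2$, your dichotomy does go through: any non-cycle closed facial walk of length~$4$ or~$5$ must traverse some edge twice, and a short case check shows this forces a vertex of degree~$1$. With that in hand your computation finishes the proof; you would still owe the (easy) reduction to connected $G$ so that Euler's formula reads $n-m+f=2$. One last point: your explanation of the $G\neq K_3$ clause is off. For $K_3$ both faces are triangles, the dichotomy holds trivially, and the inequality is satisfied with equality ($6=6$); its exclusion is not needed for the bound and is presumably an artifact of how the lemma is phrased or used in~\cite{AHS.15}.
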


We use the discharging method to prove that if $(G,M)$ is a minimal couterexample to Theorem~\ref{thm:noC4C5}, then Lemma~\ref{lem:AHS.15} does not hold for $G$. This means that no counterexample can exist and that the theorem holds. The following lemma will be useful.

\begin{lemma}\label{lem:minDegree}
Let $(G,M)$ be a minimal counterexample to Theorem~\ref{thm:noC4C5}. Then, we have $\delta(G)\ge 3$. Furthermore,  if $u\in V(G)$ has degree~3, then $u$ is incident to some edge in $M$, say $uw$, and $w$ is such that $d(w)\ge 5$.
\end{lemma}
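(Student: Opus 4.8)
The plan is to argue by minimality, repeatedly using Lemma~\ref{lem:totaldegree} with $k=5$, which tells us that every vertex $u$ of a minimal counterexample $(G,M)$ satisfies $d_G(u)+2d_M(u)\ge 5$. Since $M$ is a matching, $d_M(u)\in\{0,1\}$, so this already forces $d_G(u)\ge 3$ and, moreover, any vertex of degree exactly~$3$ must satisfy $d_M(u)=1$, i.e. it is incident to an edge of $M$. This establishes $\delta(G)\ge 3$ and the first part of the ``furthermore'' clause essentially for free. The remaining work is to show that the $M$-partner $w$ of such a degree-$3$ vertex $u$ has $d(w)\ge 5$.

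For the degree bound on $w$, I would suppose toward a contradiction that $d(w)\le 4$ and build a CBC-$5$-coloring of $(G,M)$ from one of a smaller subpair. The natural choice is to delete $u$ (and the edge $uw$), obtaining $(G',M')=(G-u,M-uw)$, which by minimality admits a CBC-$5$-coloring $\psi$. Now $u$ has exactly two neighbours in $G'$ besides $w$, so the proper-coloring constraints from $N_{G'}(u)$ forbid at most two colors, and the backbone constraint from $w$ forbids $\langle\psi(w)\rangle$, which has size~$3$ in $[5]$; a crude count gives $a_\psi(u)\ge 5-2-3=0$, which is not immediately enough. The fix is to also recolor $w$: since in $G'$ the vertex $w$ has degree at most~$3$ (we removed the edge to $u$) and $w$ has no backbone edge in $M'$, $w$ has at least $5-3=2$ available colors in $\psi$ restricted to $G'-w$, giving us freedom to choose $\psi(w)$ so as to leave $u$ with a legal color. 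Concretely, one shows that among the (at least two) admissible recolorings of $w$, and the five candidate colors for $u$, the combined constraints — the two colors on the other neighbours of $u$, and the requirement $|\psi(u)-\psi(w)|\in\{2,3\}$ — can always be satisfied; this is a short finite case analysis on the at most two ``bad'' colors at $u$ and the position of $\psi(w)$ in the cycle $\Z/5$. I would also need to double-check that recoloring $w$ does not violate any constraint at $w$'s other neighbours, which is exactly what ``available color for $w$'' guarantees.

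The main obstacle I anticipate is precisely this simultaneous recoloring bookkeeping: a naive deletion of $u$ alone leaves too few colors, so one must exploit the slack at $w$ (its low degree and absence of a further matching edge) and handle the interaction between the circular constraint $\langle\cdot\rangle$ on $uw$ and the ordinary constraints on the other two neighbours of $u$. A clean way to organize it is: (i) list $A_\psi(u)$ as a function of $\psi(w)$ ranging over its $\ge 2$ legal values; (ii) observe that changing $\psi(w)$ by moving it around $\Z/5$ shifts the forbidden set $\langle\psi(w)\rangle$ and hence changes which colors are available to $u$; (iii) conclude that some legal choice of $\psi(w)$ leaves $A_\psi(u)\neq\varnothing$. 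One subtlety worth flagging: the two neighbours of $u$ in $G'$ could coincide with neighbours of $w$, or with each other's constraints, but since $G$ has no $C_4$, $u$ and $w$ have at most one common neighbour, and the $C_3$/$C_4$-freeness limits further coincidences — these structural restrictions are what make the case analysis terminate favourably. Once $w$ is shown to have degree $\ge 5$, the lemma follows.
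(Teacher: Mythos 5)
Your proposal is correct and follows essentially the same route as the paper: the first part is the same direct application of Lemma~\ref{lem:totaldegree}, and for the second part your ``delete $u$, then recolor $w$'' step is equivalent to the paper's deletion of both $u$ and $w$ from the minimal counterexample, after which both arguments reduce to the same count ($a_\psi(u)\ge 3$, $a_\psi(w)\ge 2$, and since $\adj{c}$ has size $3$ and distinct $c$ give distinct sets $\adj{c}$ in $[5]$, some admissible color for $w$ leaves $u$ a legal color). The only superfluous element is your appeal to $C_4$-freeness to control common neighbours: the counting works in the worst case regardless of such coincidences, so no structural restriction is needed here.
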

\begin{proof}
Let $u\in V(G)$, and denote by $T$ the subgraph $(V(G),M)$. By Lemma~\ref{lem:totaldegree} and because $d_T(u)\le 1$, we get that $d_G(u)\ge 3$. Similarly, if $d_G(u) \le 4$, we must have $d_T(u) = 1$. So, suppose that $u\in V(G)$ has degree~3 and let $w\in V(G)$ be such that $uw\in M$. By contradiction, suppose that $d(w)\le 4$, and let $\psi$ be a CBC-5-coloring of $(G-u-w, M-uw)$. Note that $a_\psi(u)\ge 3$ and $a_\psi(w)\ge 2$. Therefore, there exists a color $c\in A_\psi(w)$ such that $A_\psi(u)\setminus \adj{c}\neq \emptyset$. This implies that $\psi$ can be extended to $(G,M)$, a contradiction.
\end{proof}

Denote by $F_3$ the set of faces of degree~3 of $G$. We start by giving charge $d(v)-3$ for every $v\in V(G)$, and $-\frac{3}{2}$ for every $t\in F_3$. We want to distribute the charge between the vertices of $G$ and the faces in $F_3$ in such a way as to ensure that at the end, each vertex and each face in $F_3$ has nonnegative charge. Because the total amount of charge does not change, we get (below, $f_3$ and $n$ represent $\lvert F_3\rvert$ and $\lvert V(G)\rvert$, respectively):
\[\sum_{v\in V(G)}(d(v)-3)-\frac{3f_3}{2} \ge 0\Leftrightarrow \sum_{v\in V(G)}d(v) \ge 3n + \frac{3f_3}{2}.\]
This contradicts Lemma~\ref{lem:AHS.15}. To prove this can be done, we apply the following discharging rules. Below, given $u\in V(G)$, we denote by $F_3(u)$ the set of faces of degree~3 containing $u$.

\begin{Drule}\label{noC4C5R1}
For each $uw\in M$ such that $d(u) = 3$, send $\frac{1}{2}$ charge from $w$ to $u$.
\end{Drule}

\begin{Drule}\label{noC4C5R2}
For each $u\in V(G)$ and each $t\in F_3(u)$, send charge $\frac{1}{2}$ from $u$ to $t$.
\end{Drule}

\begin{proof}[of Theorem~\ref{thm:noC4C5}]
For each $x\in V(G)\cup F_3$, denote by $\mu_0(x),\mu_1(x),\mu_2(x)$ the charge of $x$ before Rule~\ref{noC4C5R1} has been applied, before Rule~\ref{noC4C5R2} has been applied and after Rule~\ref{noC4C5R2} has been applied, respectively. Because $M$ is a matching, no vertex is incident to more than one edge in $M$. Thus, by Lemma \ref{lem:minDegree}, we get the following:
\begin{itemize}
\item If $d(u) = 3$, then $\mu_1(u) = \frac{1}{2}$; 
\item If $d(u) = 4$, then $\mu_1(u) = \mu_0(u) = 1$; and
\item If $d(u) \ge 5$, then $\mu_1(u) \ge \mu_0(u) - \frac{1}{2} = \frac{2d(u)-7}{2}$.
\end{itemize}
Now, for each $u\in V(G)$, denote by $f_3(u)$ the value $\lvert F_3(u)\rvert$. Note that, since $G$ has no cycles of length~4, no two faces in $F_3$ can share an edge. This implies that $f_3(u)\le \lfloor \frac{d(u)}{2}\rfloor$. One can verify by what is said above that $\mu_1(u)\ge \frac{d(u)}{4}\ge \frac{f_3(u)}{2}$. This means that after distributing charge $1/2$ to each $t\in F_3(u)$, we get that $u$ still has non-negative charge, i.e, $\mu_2(u)\ge 0$ for every $u\in V(G)$. Finally, because each $t\in F_3$ receives charge $1/2$ from each vertex in $t$, we get $\mu_2(t) = \mu_0(t) + 3/2 = 0$.
\end{proof}


\section{Proof of Theorem~\ref{thm:noadjacentC3}}\label{sec:noadjacentC3}

Consider a plane graph $G$ and its dual $G^*$, and let $F_3$ be the set of faces of degree~3 in $G$ (alternatively, the set of vertices of degree~3 in $G^*$). We denote the graph $G^*-F_3$ by $G^*_4$, and say that a component of $G^*_4$ is an \emph{island of $G$}. Also, if $H$ is an acyclic component of $G^*_4$ such that $d_{G^*}(f) = 4$, for every $f\in V(H)$, then we say that $H$ is a \emph{bad island of $G$}. 
We denote the set of bad islands of $G$ by $\Gamma$ and we let $\gamma$ denote $\lvert \Gamma\rvert$. Let $f\in F_3$ and $H$ be an island of $G$; we say that $f$ \emph{share an edge with $H$} if $N_H(f)\neq \emptyset$ (i.e., if $f$ and $f'$ share an edge in $G$ for some $f'\in V(H)$). Also, we denote by $\Gamma(f)$ the set of bad islands that share an edge with~$f$.

\begin{lemma}\label{mainlemma_mat}
Let $G$ be a plane graph with no two faces of degree~3 sharing an edge, and let $f_3$ denote the number of faces of degree~3 in $G$. Then, \[\sum_{v\in V(G)}d(v) \le 5\lvert V(G)\rvert + \gamma - f_3 - 10.\]
\end{lemma}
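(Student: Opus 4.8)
\textbf{Approach.} The plan is to use Euler's formula on the plane graph $G$, carefully accounting for the faces of degree~3 (the set $F_3$) and the structure of $G^*_4$. Write $n=\lvert V(G)\rvert$, $m=\lvert E(G)\rvert$, $\ell$ for the total number of faces, and $f_3$ for the number of $3$-faces. Euler gives $n-m+\ell=2$. I would split $\ell=f_3+\ell'$ where $\ell'$ counts faces of degree $\ge 4$; these latter faces are exactly the vertices of $G^*_4$. Counting edge-incidences to faces, $2m=\sum_f d(f)\ge 3f_3+4\ell' = 3\ell + \ell'$, and combining with $\sum_v d(v)=2m$ this already yields a bound of the shape $\sum_v d(v)\le 5n-10-\ell'+3(\ell'-\,\text{something})$; the point of the refined statement is to push the crude $4$ up to an effective $4$ plus a correction that is controlled by the number of bad islands $\gamma$. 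So the first step is to get the right baseline inequality from Euler, and the second step is to show that the slack gained from non-bad-island structure compensates the $+\gamma$ term.

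\textbf{Key steps.} First, I would translate the hypothesis ``no two $3$-faces share an edge'' into: every edge of $G^*$ is incident to at most one vertex of $F_3$, equivalently $F_3$ is an independent set in $G^*$, so every edge leaving an $F_3$-vertex goes into $V(G^*_4)$. Hence $3f_3=\sum_{f\in F_3} d_{G^*}(f) \le$ (number of $G^*$-edges with exactly one endpoint in $F_3$), and these are precisely the edges of $G^*$ not inside $G^*_4$ and not inside $F_3$ — but since $F_3$ is independent, every edge incident to $F_3$ has its other end in $V(G^*_4)$, so $3f_3$ equals the number of edges between $F_3$ and $V(G^*_4)$. Second, I would bound $\sum_{f\in V(G^*_4)} d_{G^*}(f)$ from below: each such vertex has degree $\ge 4$, and I want to show the sum is at least $4\ell' + (\text{extra})$, where the extra term counts ``excess degree'' beyond $4$. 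The components of $G^*_4$ that are \emph{not} bad islands either contain a vertex of degree $\ge 5$ or contain a cycle; in either case a standard count shows such a component $H$ satisfies $\sum_{f\in V(H)} d_{G^*}(f) \ge 4\lvert V(H)\rvert + 1$ (a unicyclic-or-denser component on $t$ vertices has $\ge t$ edges inside, contributing; a tree with a degree-$\ge 5$ vertex likewise gives one unit of slack), whereas a bad island contributes exactly $4\lvert V(H)\rvert$. Third, assemble: $2m = \sum_f d(f) = 3f_3 + \sum_{f\in V(G^*_4)} d_{G^*}(f)$ — wait, more precisely $2m=\sum_f d_G(f)$ and $d_G(f)=d_{G^*}(f)$, so $2m \ge 3f_3 + 4\ell' + (\ell' - \gamma - (\text{number of non-bad-island components}))$ or similar; then feed into Euler to isolate $\sum_v d(v)=2m$ and obtain $\sum_v d(v)\le 5n + \gamma - f_3 - 10$. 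I would organize the bookkeeping so that the ``$-10$'' comes from doubling the ``$-2$'' of Euler and an extra $-6$-type term, exactly as in the proof of Lemma~\ref{lem:AHS.15}.

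\textbf{Main obstacle.} The delicate point is the precise slack accounting in step two: I must verify that \emph{every} component of $G^*_4$ that fails to be a bad island contributes at least one extra unit of degree-sum (beyond $4$ per vertex), and that no component is double-counted, so that the total correction is exactly $+\gamma$ rather than something weaker. This requires a clean case analysis on the definition of bad island (acyclic and all degrees exactly $4$ in $G^*$): a component is non-bad precisely when it has a cycle or has a vertex of $G^*$-degree $\ge 5$ (note $G^*$-degree, not $G^*_4$-degree, matters, since edges to $F_3$ count), and in each case I need the $+1$. Handling the interaction between ``degree in $G^*$'' and ``degree in $G^*_4$'' for boundary vertices of islands — vertices adjacent to $F_3$ — is where I expect the argument to need the most care. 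Everything else is routine substitution into Euler's formula.
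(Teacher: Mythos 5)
Your overall skeleton --- Euler's formula, the independence of $F_3$ in $G^*$, the identification of the edges not lying on $3$-faces with $E(G^*_4)$, and the bound ``edges $\ge$ vertices minus acyclic components'' together with the observation that an acyclic non-bad component must contain a face of degree at least~$5$ --- is exactly the machinery the paper uses. But two things in your write-up do not survive scrutiny. First, the per-component inequality you commit to, namely that every non-bad component $H$ of $G^*_4$ satisfies $\sum_{f\in V(H)} d_{G^*}(f)\ge 4\lvert V(H)\rvert+1$, is false: a component may fail to be a bad island because it contains a cycle while all of its vertices still have $G^*$-degree exactly~$4$. (Take $G$ to be the cube: $G^*_4=G^*$ is the octahedron, every face of $G$ has degree~$4$, and the degree sum over the unique component is exactly $4\lvert V(H)\rvert=24$.) The ``$\ge t$ edges inside'' of a cyclic component is slack in the \emph{edge count} of $G^*_4$, not in the $G^*$-degree sum, so your ``in either case'' conflates two different sources of slack, and the case you actually need for cyclic components is missing.

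Second, the assembly is not pinned down, and the baseline you state does not produce the right coefficient. Writing $s=\sum_{\lvert f\rvert\ge 4}(\lvert f\rvert-4)$ and $\ell'=\lvert V(G^*_4)\rvert$, the identity $2m=3f_3+4\ell'+s$ combined with Euler gives $2m=4n-8+f_3-s$, and to reach $2m\le 5n+\gamma-f_3-10$ from there you need precisely $\ell'\le \lvert E(G^*_4)\rvert+s+\gamma$ --- an inequality you never isolate; your ``$2m\ge 3f_3+4\ell'+(\ell'-\gamma-\dots)$ or similar'' is where the actual content lives and it is left unproved. The paper's route avoids this tangle: it proves the single claim $3f_3+f_4\le m+\gamma$ by partitioning $E(G)$ into the $3f_3$ edges on $3$-faces and the rest, noting the rest are in bijection with $E(G^*_4)$, and using $\lvert E(G^*_4)\rvert\ge\lvert V(G^*_4)\rvert-a\ge f_4-\gamma$, where $a$ is the number of acyclic components and the last step uses that each acyclic non-bad component contributes a face of degree at least~$5$ to $\lvert V(G^*_4)\rvert-f_4$. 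It then feeds this into $\sum_{f}(\lvert f\rvert-5)\ge-2f_3-f_4\ge f_3-m-\gamma$ and Euler's formula, which is where the $5n$ and the $-10$ come from. Your plan has the right pieces, but the one inequality you state precisely is wrong, and the bookkeeping that yields the stated bound is exactly the part you have not done.
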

\begin{proof}
Let $f_4$ denote the number of faces of degree~4 in $G$, $\lvert E(G)\rvert$ be denoted by $m$, ${\cal F}$ denote the set of faces of $G$ and, given $f\in {\cal F}$, let $\lvert f\rvert$ denote the degree of $f$. We claim that: 

\begin{equation}\label{claim1} 3f_3+f_4\le m+\gamma\end{equation}

This implies that $\sum_{f\in {\cal F}}(\lvert f\rvert - 5) \ge -2f_3-f_4\ge -m-\gamma+f_3$. On the other hand $\sum_{f\in {\cal F}}(\lvert f\rvert) - 5\lvert {\cal F}\rvert = 2m - 5\lvert {\cal F}\rvert$. Combining these and applying Euler's Formula we get (below, $n$ denotes $\lvert V(G)\rvert$):
$$ 2m - 5(2-n+m)\ge -m-\gamma +f_3 \Longleftrightarrow 2m\le 5n+\gamma - f_3 -10$$

It remains to prove Inequality~\ref{claim1}. For this, we partition $E(G)$ in $E_3,\overline{E}_3$, where $E_3$ is described below and $\overline{E}_3 = E(G)\setminus E_3$.
$$E_3 = \{e\in E(G)\mid \mbox{ $e$ is in the boundary of some face of degree 3}\}.$$

Because $G$ has no two faces of degree~3 sharing an edge, we get $\lvert E_3\rvert = 3f_3$. We prove that $\lvert \overline{E}_3\rvert \ge f_4 - \gamma$, thus finishing the proof. For this, note that if $e\in \overline{E}_3$, then there is an edge $e^*$ in $G^*_4$ related to $e$. On the other hand, if $e^*\in E(G^*_4)$, then $e^*$ is related to an edge $e\in E(G)$ that separates faces of degree at least~4; hence, $e\in \overline{E}_3$. Therefore, $\lvert \overline{E}_3\rvert = \lvert E(G^*_4)\rvert$. Finally, because the number of edges in any graph is at least the number of vertices minus the number of acyclic components of the graph, we get:
$$\lvert \overline{E}_3\rvert \ge \lvert V(G^*_4)\rvert - \gamma \ge f_4-\gamma.$$
$\Box$\end{proof}

Now, by supposing that there exists a couterexample $(G,M)$ to Theorem~\ref{thm:noadjacentC3}, we use the discharging method to get a contradiction to Lemma~\ref{mainlemma_mat}. For this, start by giving charge $d(v)-5$ to each $v\in V(G)$, charge 1 to each $f\in F_3$, and charge -1 to each $b\in \Gamma$. Then, we apply discharging rules and ensure that this initial charge can be redistributed in the graph in such a way that every vertex, every face of degree~3 and every bad island have non-negative charge. We get a contradiction since:
$$\sum_{v\in V(G)}(d(v)-5) + f_3 -\gamma \ge 0 \Leftrightarrow \sum_{v\in V(G)}d(v)\ge 5n + \gamma - f_3.$$

We need the following lemma. 

\begin{lemma}\label{lem:degree_mat}
Let $(G,M)$ be a minimal counterexample to Theorem~\ref{thm:noadjacentC3}. Then, we have $\delta(G)\ge 4$. Furthermore, if $u\in V(G)$ has degree~4, then $u$ is incident to an edge of $M$, say $uw$, and $d(w)\ge 6$.
\end{lemma}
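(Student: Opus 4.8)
The approach mirrors the proof of Lemma~\ref{lem:minDegree}: take a minimal counterexample $(G,M)$ to Theorem~\ref{thm:noadjacentC3} (so $CBC(G,M) > 6$, i.e.\ $k=6$) and rule out low-degree vertices by extending a CBC-$6$-coloring from a smaller subpair. First I would invoke Lemma~\ref{lem:totaldegree} with $k=6$: since $d_T(u) \le 1$ for $T = (V(G),M)$ (as $M$ is a matching), we get $d_G(u) + 2 \ge d_G(u) + 2d_M(u) \ge 6$, hence $d_G(u) \ge 4$, giving $\delta(G) \ge 4$. For the same reason, if $d_G(u) = 4$ then Lemma~\ref{lem:totaldegree} forces $d_M(u) = 1$, so $u$ is matched; call its partner $w$ with $uw \in M$.

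The substantive part is showing $d(w) \ge 6$. Suppose for contradiction $d(w) \le 5$. Form the subpair $(G - u - w,\, M - uw)$, which is smaller than $(G,M)$, so by minimality it admits a CBC-$6$-coloring $\psi$. Now I count available colors. The vertex $u$ has $3$ neighbours in $G - u - w$ (its fourth neighbour is $w$) and no neighbour in $M - uw$, so $a_\psi(u) \ge 6 - 3 = 3$. The vertex $w$ has at most $4$ neighbours in $G - u - w$ (its fifth or lower being $u$) and, again, no $M$-neighbour there, so $a_\psi(w) \ge 6 - 4 = 2$. I must pick $c \in A_\psi(w)$ and then $c' \in A_\psi(u)$ with $c' \notin \adj{c}$; note $\lvert\adj{c}\rvert = 3$ in the circular space $[6]$ (the colours $c$, $c-1$, $c+1$ mod $6$). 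The worry is that $A_\psi(u)$ could be entirely contained in $\adj{c}$ for every choice of $c \in A_\psi(w)$.

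Here is where the main obstacle lies, and I expect it to be resolved by a short case analysis rather than a one-line count, since $a_\psi(u) \ge 3$ and $\lvert\adj{c}\rvert = 3$ leaves no slack in the worst case. The key observation is that $A_\psi(u)$, having size at least $3$, is contained in $\adj{c}$ only if $A_\psi(u)$ is exactly a set of three consecutive colours $\{c-1,c,c+1\}$ in the cyclic order, which forces $c$ to be its \emph{middle} element; thus for each such ``bad'' $c$ there is essentially one choice of $A_\psi(u)$, and conversely a given $A_\psi(u)$ of size exactly $3$ blocks at most one value of $c$ (its middle element), while an $A_\psi(u)$ of size $\ge 4$ blocks none. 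Since $a_\psi(w) \ge 2$, the set $A_\psi(w)$ contains at least two colours, so at least one of them is not the forbidden middle element, and we can complete the extension --- unless $a_\psi(w) = 2$ and $a_\psi(u) = 3$ with $A_\psi(u) = \adj{c}$ for \emph{both} elements $c$ of $A_\psi(w)$, which is impossible since a $3$-element set has a unique middle element in the cyclic order of $[6]$. In every case $\psi$ extends to a CBC-$6$-coloring of $(G,M)$, contradicting the choice of $(G,M)$; hence $d(w) \ge 6$, completing the proof.
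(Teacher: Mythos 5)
Your proof is correct and follows essentially the same route as the paper: apply Lemma~\ref{lem:totaldegree} with $k=6$ for both the minimum degree bound and the forced matching edge, then extend a CBC-$6$-coloring of $(G-u-w, M-uw)$ using the counts $a_\psi(u)\ge 3$ and $a_\psi(w)\ge 2$. Your closing case analysis (that a $3$-element set equals $\adj{c}$ for at most one $c$, so one of the two colors in $A_\psi(w)$ must work) is a welcome justification of a step the paper merely asserts.
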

\begin{proof}
Let $T$ denote the subgraph $(V(G),M)$. By Lemma~\ref{lem:totaldegree}, we get $\delta(G)\ge 4$, and that $d_T(u)=1$ whenever $d_G(u)\le 5$.
So, consider $u\in V(G)$ with degree~4, and suppose that $d(w)\le 5$, where $w$ is such that $uw\in M$. Let $\psi$ be a CBC-6-coloring of $(G-\{u,w\},M-\{u,w\})$. Then, $a_\psi(w)\ge 3$ and $a_\psi(u)\ge 2$. Therefore, there exists a color $c\in A_\psi(u)$ such that $A_\psi(w)\setminus \adj{c}\neq\emptyset$, which implies that $\psi$ can be extended to $(G,M)$, a contradiction.
\end{proof}

Let $V_4$ be the set of vertices with degree~4 in $G$, and for each $u\in V_4$, denote by $u^*$ the vertex such that $uu^*\in M$. The discharging rules are the following:

\setcounter{Drule}{0}
\begin{Drule}\label{MR1}
For each $f\in F_3$, send charge $\frac{1}{3}$ from $f$ to each $b\in \Gamma(f)$.
\end{Drule}

\begin{Drule}\label{MR2}
For each $u\in V_4$, send charge $1$ from $u^*$ to $u$.
\end{Drule}

\begin{proof}[of Theorem~\ref{thm:noadjacentC3}]
For each $x\in V(G)\cup F_3\cup \Gamma$, let $\mu_0(x),\mu_1(x),\mu_2(x)$ denote the charge of $x$ before Rule~\ref{MR1}, after Rule~\ref{MR1}, and after Rule~\ref{MR2} has been applied, respectively. Recall that $\mu_0(v) = d(v)-5$, for every $v\in V(G)$; $\mu_0(f) = 1$, for every $f\in F_3$; and $\mu_0(b) = -1$, for every $b\in \Gamma$. 

Because $M$ is a matching and by Lemma~\ref{lem:degree_mat}, we get that $\mu_2(v)\ge 0$, for every $v\in V(G)$. Also, for each $f\in F_3$, we have $\lvert \Gamma(f)\rvert\le 3$; hence $\mu_2(f) = \mu_1(f) = \mu_0(f) - \lvert \Gamma(f)\rvert/3 \ge 0$. It remains to prove that each bad island also ends up with non-negative charge. So, consider a bad island of $G$, i.e., an acyclic component $H$ of $G^*_4$ such that each  $f\in V(H)$ has degree~4 in $G^*$. If $V(H) = \{f\}$, because two faces of degree~3 in $G$ intersect in at most one vertex, we get that $f$ corresponds to an induced cycle of length~4 in $G$, which implies that $f$ is adjacent to~4 distinct vertices of $F_3$. And if $\lvert V(H)\rvert \ge 2$, then $H$ has at least one leaf, say $f$; as before, we get that $f$ is adjacent to at least~3 distinct vertices of $F_3$. In any case, we get that $y = \lvert \{f\in F_3\mid H\in \Gamma(f)\}\rvert \ge 3$, which implies that $\mu_2(H) = \mu_1(H) = \mu_0(H) + y/3\ge 0$.
\end{proof}


\section{Linear Forest Backbone}
\label{sec:forest}

We prove Theorem \ref{main:forest} in this section using the same general strategy, except that the structural properties needed are more complex. In the previous sections, a simple lemma concerning at most two vertices, say $u$ and $v$, was enough to say that a CBC-$k$-coloring $\psi$ of $(G-u-v,H-u-v)$ could be extended to $(G,H)$. Here, the backbone is a linear tree and therefore we sometimes need to remove entire subpaths from a minimal counterexample $(G,H)$. For this, we work with the lists $A_\psi$ in a more clever way. This is done in the next subsection.

\subsection{Forbidden Structures}

Let $(H,P)$ be such that $P\subseteq H$, $k$ be a positive integer, and  $\mathcal{L}:V(H)\rightarrow 2^{[k]}$. If there exists a CBC-$k$-coloring $\psi$ of $(H,P)$ such that $\psi(v)\in \mathcal{L}(v)$, for all $v\in V(H)$, then we say that $(H,P)$ is \emph{$\mathcal{L}$-CBC-$k$-colorable}. 
Throughout the proof, we sometimes consider $\mathcal{L}$ to be smallest possible in the context. This is not a problem since whenever $(H,P)$ is $\mathcal{L}$-CBC-$k$-colorable and $\mathcal{L}'$ is such that $\mathcal{L}(v)\subseteq \mathcal{L}'(v)$, for every $v\in V(H)$, we also have that $(H,P)$ is $\mathcal{L}'$-CBC-$k$-colorable. 

Consider a pair $(H,P)$ such that $P$ is a Hamiltonian path of $H$, and write $P$ as $(v_1,\ldots,v_n)$. Also, let
$\Lcal:V(H)\rightarrow 2^{[7]}$ be a list assignment for $H$, and $\Lp:V(H')\rightarrow 2^{[7]}$ be a list assignment for $H'\subseteq H$. We use the reduction rule below to prove the non-existence of certain structures in a minimal counterexample to Theorem~\ref{main:forest}. We denote the values $\lvert \Li{x}\rvert$ and $\lvert\mathcal{L}'(x)\rvert$ by $\ell(x)$ and $\ell'(x)$, respectively. 

  \textbf{Reduction Rule}: $((H',P'),\Lp)$ is a \emph{reduction of $((H,P),\Lcal)$ on $v_1$} if:
  \begin{itemize} 
  \item[-] $H'=H-v_1$;
  \item[-] $P' = P-v_1$;
  \item[-] $\ell'(v_2)\ge\ell(v_2)-2$;
  \item[-] $\ell'(x)\ge\ell(x)-1$, for every $x\in N(v_1)\setminus \{v_2\}$; 
  \item[-] $\ell'(x)=\ell(x)$, for every $x\in V(H)\setminus N[v_1]$; and 
  \item[-] If $\Li{v_2}\setminus \Lip{v_2}=\{c,d\}$, then $\lvert \adj{c}\cup \adj{d}\rvert \le 5$.
  \end{itemize}

We say that a reduction $((H',P'),\Lp)$ of $((H,P),\Lcal)$ on $v_1$ is \emph{extendable} if every $\Lp$-CBC-7-coloring of $(H',P')$ can be extended to an $\Lcal$-CBC-7-coloring of $(H,P)$. The following lemma gives suficient conditions for $((H,P),\Lcal)$ to have an extendable reduction.

\begin{lemma}\label{2reduction}
Let $H$ be any graph, $P=(v_1,\ldots,v_n)$ be a Hamiltonian path of $H$, and consider $\Lcal:V(H)\rightarrow 2^{[7]}$. If the conditions below hold, then $((H,P),\Lcal)$ has an extendable reduction on $v_1$.
\begin{enumerate}
\item \label{1} $d(v_1)\le 4$;
\item \label{2} $\ell(v_1)\ge 1+d(v_1)$; and
\item \label{3} If $d(v_1)=4$, and $c$ and $d$ are the colors not in $\Li{v_1}$, then $\lvert \adj{c}\cup \adj{d}\rvert \le 5$.
\end{enumerate}
\end{lemma}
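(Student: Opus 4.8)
The plan is to build the reduction $((H',P'),\Lp)$ on $v_1$ by deleting $v_1$ and shrinking the lists of its neighbours to record the colour eventually assigned to $v_1$, and then to check that the six bullet conditions of the Reduction Rule hold and that the reduction is extendable. Set $H'=H-v_1$ and $P'=P-v_1$. The subtle point is how to define $\Lp$ on $N(v_1)$. For a neighbour $x\in N(v_1)\setminus\{v_2\}$ the edge $v_1x$ is in $H$ but not in $P$ (since $P$ is a path and $v_1$ is an endpoint, its only $P$-neighbour is $v_2$); so colouring $v_1$ forbids only \emph{one} colour at $x$. For $v_2$, the edge $v_1v_2$ is a backbone edge, so colouring $v_1$ with $c$ forbids the whole set $\adj{c}$ at $v_2$; however, rather than removing three colours I will only ``promise'' to remove at most two, which is exactly what the Reduction Rule permits.

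Concretely: let $A=\Li{v_1}$, which by hypothesis~\ref{2} has size at least $1+d(v_1)$. First I would argue that there is a colour $c_0\in A$ such that, for every valid $\Lp$-colouring $\psi'$ of $(H',P')$, setting $\psi(v_1)=c_0$ completes a colouring of $(H,P)$ — equivalently, $c_0$ avoids $\psi'(v_2),\dots$ and lies outside $\adj{\psi'(v_2)}$. The counting is the standard ``list colouring of a single vertex'' argument: the $d(v_1)-1$ non-$P$ neighbours each kill at most one colour of $A$ and $v_2$ kills at most three (the colours in $\adj{\psi'(v_2)}$), so a naive bound needs $\ell(v_1)\ge 1+(d(v_1)-1)+3=d(v_1)+3$, which is too weak. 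The trick, and the reason conditions~\ref{1} and~\ref{3} are present, is to instead \emph{reserve} two colours of $A$ for $v_2$ in advance: choose $c,d\in A$ with $c\ne d$ such that $A\setminus\{c,d\}$ still dominates every possible value of $v_2$ in the circular sense, i.e. for every colour $\beta$ that $v_2$ might get, $(A\setminus(\{c,d\}\cup\psi'(N(v_1)\setminus\{v_1,v_2\})))\setminus\adj{\beta}\ne\emptyset$. Then in $\Lp$ I delete exactly $\{c,d\}$ from $\Li{v_2}$ (giving $\ell'(v_2)\ge\ell(v_2)-2$), delete one colour (the eventual $\psi(v_1)$, which will lie in $A\setminus\{c,d\}$) from each $\Li{x}$, $x\in N(v_1)\setminus\{v_2\}$, and leave all other lists unchanged. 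Since $d(v_1)\le 4$ we have $|A\setminus\{c,d\}|\ge \ell(v_1)-2\ge d(v_1)-1=|N(v_1)\setminus\{v_2\}|$, so after the $d(v_1)-1$ neighbours outside $v_2$ remove their colours there is still a colour $c_0$ of $A$ left \emph{not} in $\adj{\psi'(v_2)}$ provided $A\setminus\{c,d\}$ meets the complement of every $\adj{\beta}$ — that is where I need to show the right $\{c,d\}$ exists.

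The last bullet of the Reduction Rule, $|\adj{c}\cup\adj{d}|\le 5$, is then inherited directly: when $d(v_1)=4$, $\ell(v_1)\ge 5$, so $A$ misses at most two colours; choosing $\{c,d\}$ to be exactly $[7]\setminus A$ when $|[7]\setminus A|=2$, condition~\ref{3} gives $|\adj{c}\cup\adj{d}|\le 5$ outright, and when $|[7]\setminus A|\le 1$ one checks directly (two ``adjacency'' balls in $\Z/7$ have union of size $\le 5$ unless $c,d$ are antipodal-ish, and the slack in $\ell(v_1)$ lets us avoid that pair). When $d(v_1)\le 3$ there is enough room in $A$ ($\ell(v_1)\ge 4$) to pick $c,d$ at circular distance $\le 1$, forcing $|\adj{c}\cup\adj{d}|\le 4\le 5$. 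I expect the main obstacle to be precisely this combinatorial case analysis in $\Z/7$: verifying that a ``good'' pair $\{c,d\}$ — one that simultaneously can be spared by $v_2$ for all of its at most $7-\ell(v_2)\cdots$ possible values and satisfies $|\adj{c}\cup\adj{d}|\le 5$ — always exists under hypotheses~\ref{1}–\ref{3}. Everything else (the equalities $H'=H-v_1$, $P'=P-v_1$, the list-size inequalities, and extendability, which is immediate once $c_0$ is produced) is bookkeeping.
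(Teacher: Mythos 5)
Your high-level shape (delete $v_1$, shrink the neighbours' lists, then extend) matches the paper's, but the core of the argument is missing and the counting you do supply does not close. Two concrete problems. First, your plan is circular: you propose to delete from each $\Li{x}$, $x\in N(v_1)\setminus\{v_2\}$, ``the eventual $\psi(v_1)$'', but the reduction must be fixed \emph{before} any colouring of $(H',P')$ is known. The whole point of that deletion is to commit in advance to a single reserved colour of $\Li{v_1}$ that no non-backbone neighbour can steal; without naming it up front the $d(v_1)-1$ other neighbours may each kill a distinct colour of $\Li{v_1}$ and $\adj{\psi(v_2)}$ may kill three more, and your bound $\lvert \Li{v_1}\setminus\{c,d\}\rvert\ge d(v_1)-1$ leaves zero slack. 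Second, you never actually exhibit the pair $\{c,d\}$ (and you waver between $c,d\in\Li{v_1}$ and $\{c,d\}=[7]\setminus\Li{v_1}$); you explicitly defer ``the combinatorial case analysis in $\Z/7$'', but that analysis \emph{is} the proof.

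The missing idea, which the paper's proof supplies, is to reserve a colour $c\in\Li{v_1}$ that is \emph{isolated} in $\Li{v_1}$ in the circular sense (i.e.\ $\{c-1,c+1\}\cap\Li{v_1}=\emptyset$), remove $\{c-1,c+1\}$ from $\Li{v_2}$ (a pair at circular distance~2, so $\lvert\adj{c-1}\cup\adj{c+1}\rvert=5$ as the last bullet of the Reduction Rule demands) and remove $c$ from the other neighbours' lists. Then either $\psi(v_2)\neq c$ and $v_1$ can simply take $c$, or $\psi(v_2)=c$ and $\adj{\psi(v_2)}$ meets $\Li{v_1}$ in the single colour $c$, so at most $(d(v_1)-1)+1=d(v_1)<\ell(v_1)$ colours of $\Li{v_1}$ are forbidden. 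When no isolated colour exists, at least two colours are missing from $\Li{v_1}$, and Condition~3 together with a rotation of $[7]$ lets one assume $\{1,2\}\subseteq\Li{v_1}$ and $\{6,7\}\cap\Li{v_1}=\emptyset$; reserving colour~$1$ and removing $\{1,2\}$ from $\Li{v_2}$ then works by the same dichotomy ($\psi(v_2)\neq 7$ versus $\psi(v_2)=7$). None of this is in your write-up, so as it stands the proposal is a plan rather than a proof.
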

\begin{proof}
Without loss of generality, suppose that $\ell(v_1) = 1+ d(v_1)$. First, suppose that $d(v_1)=1$. If $\Li{v_1}$ has two consecutive colors, then remove both from $\Li{v_2}$; if $\Li{v_1}=\{c-1,c+1\}$ for some $c\in [7]$, then remove $c$ from $\Li{v_2}$; otherwise, do not change $\Li{v_2}$. Let $\Lp$ be the obtained function. One can see that $((H-v_1,P-v_1),\Lp)$ is a reduction of $((H,P),\Lcal)$ on $v_1$. Let $\psi$ be an $\Lp$-CBC-$7$-coloring of $(H-v_1,P-v_1)$; if no such coloring exists, then the lemma holds by vacuity. By the choice of the removed colors, note that $\Li{v_1}\setminus\adj{\psi(v_2)}\neq \emptyset$, which means that $\psi$ can be extended to $v_1$. 

Now, consider $d(v_1)>1$. First, suppose that there exists $c\in\Li{v_1}$ such that $\{c-1,c+1\}\cap\Li{v_1}=\emptyset$. Let $\Lp$ be obtained by removing $c-1$ and $c+1$ from $\Li{v_2}$, and $c$ from $\Li{x}$, for every $x\in N(v_1)\setminus\{v_2\}$. Then $((H-v_1,P-v_1),\Lp)$ is a reduction of $((H,P),\Lcal)$ on $v_1$, and we want to show that it is extendable. So let $\psi$ be an $\mathcal{L}'$-CBC-$7$-coloring of $(H-v_1,P-v_1)$, and let $F = \psi(N(v_1))\cup\adj{\psi(v_2)}$, the set of colors that are forbidden for $v_1$. If $\psi(v_2)\neq c$, we can color $v_1$ with $c$. Otherwise, we get $\lvert \Li{v_1}\cap \adj{\psi(v_2)}\rvert =1$, which implies that 
\begin{equation}\label{Eq:2reduction}
\lvert \Li{v_1}\cap F\rvert \le \lvert \Li{v_1}\cap \psi(N(v_1)\setminus\{v_2\})\rvert + \lvert \Li{v_1}\cap \adj{\psi(v_2)}\rvert \le d(v_1)-1+1.
\end{equation} 
Since $\ell(v_1)=d(v_1)+1$, there is a color in $\Li{v_1}\setminus F$ with which we can color~$v_1$. 

Finally, suppose that: 
\begin{itemize}
\item[(*)] $\{c-1,c+1\}\cap \Li{v_1}\neq \emptyset$, for every $c\in\Li{v_1}$.
\end{itemize}

Because $2\le d(v_1)\le 4$ and $\ell(v_1)=d(v_1)+1$, we know that at least two colors are not in $\Li{v_1}$ and that $\ell(v_1)\ge 3$. Without loss of generality and by (*), we can suppose that $\{1,2\}\subset \Li{v_1}$ and that $7\notin \Li{v_1}$. We claim that we can also suppose that $6\notin\Li{v_1}$. Suppose otherwise; by (*) we get that $\{1,2,5,6\}\subseteq \Li{v_1}$. If $\{3,4\}\cap \Li{v_1}=\emptyset$, then we rotate the colors so that $1$ coincides with $5$ and the desired property holds. Otherwise, we get a contradiction to Property~\ref{3} since $\lvert \adj{c}\cup \adj{7}\rvert =6$ where $c\in \{3,4\}\setminus\Li{v_1}$.
Now, let $\Lp$ be obtained by removing 1 from $\Li{v_i}$, for every $v_i\in N(v_1)\setminus\{v_2\}$, and 1 and 2 from $\Li{v_2}$, and let $\psi$ be an $\mathcal{L}'$-CBC-$7$-coloring of $(H-v_1,P-v_1)$. If $\psi(v_2)\neq 7$, we can color $v_1$ with 1. Otherwise, since $\{6,7\}\cap \Li{v_1} = \emptyset$, we get $\lvert \adj{\psi(v_2)}\cap \Li{v_1}\rvert = 1$ and, again by Inequality~\ref{Eq:2reduction}, we get that there must exist a color in $\Li{v_1}$ with which we can color $v_1$.
\end{proof}

Now, we want to apply the above lemma to our problem. So, consider a planar graph $G$ with no cycles of length 4, a generating linear forest $F$ of $G$, and a subpath $P$ of $F$ with certain properties. If $(G,F)$ is a minimal counterexample to Theorem \ref{main:forest}, we know that there exists a CBC-7-coloring $\psi$ of $(G-P,F-P)$. We iteratively apply Lemma \ref{2reduction}, starting with $((G[V(P)],P),A_\psi)$, until we end up with a single vertex with list of size at least one. This implies that there exists an $A_\psi$-CBC-7-coloring $\psi'$ of $(G[V(P)],P)$, which in turn implies that $\psi$ can be extended to a CBC-7-coloring of $(G,F)$, thus contradicting the choice of $(G,F)$. This ensures the non-existence of such a path $P$ in a minimal counterexample. Before we present the types of paths that cannot occur in a minimal couterexample, we need a further definition.

Let $(G,F)$ be as in the previous paragraph, and $T$ be a component of $F$. If $P$ is a maximal subpath of $T$ containing only vertices of degree at most~5 in $G$, we say that $P$ is a \emph{heavy subpath of $T$}. 
The next lemma follows easily from Lemma~\ref{lem:totaldegree} and the fact that $F$ is a linear forest.

\begin{lemma}\label{lem:degree}
Let $(G,F)$ be a minimal counterexample to Theorem~\ref{main:forest}. Then, we have $\delta(G)\ge 3$, and if $v\in V(G)$ is such that $d_G(v)\le 4$, then $d_F(v)= 2$.
\end{lemma}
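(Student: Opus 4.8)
The plan is to read off both conclusions directly from Lemma~\ref{lem:totaldegree}, using only that $F$ is a linear forest.

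First I would check that Lemma~\ref{lem:totaldegree} is applicable here with $k=7$ and $H=F$. The proof of that lemma extends a CBC-$7$-coloring of $(G-u,F-u)$ to $(G,F)$, so all that is needed is that $CBC(G-u,F-u)\le 7$ for every $u\in V(G)$. This holds because, for any $u$, the graph $G-u$ is still planar and $C_4$-free and $F-u$ is still a linear forest spanning $G-u$ (deleting a vertex from a disjoint union of paths leaves a disjoint union of paths, and $V(F-u)=V(G)\setminus\{u\}=V(G-u)$); hence $(G-u,F-u)$ is again an instance of Theorem~\ref{main:forest}, and by the minimality of $(G,F)$ it satisfies the conclusion of that theorem. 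Thus Lemma~\ref{lem:totaldegree} yields $d_G(v)+2d_F(v)\ge 7$ for every $v\in V(G)$.

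Then I would finish with two short numeric deductions. Since every component of $F$ is a path, each vertex has at most two neighbours in $F$, i.e., $d_F(v)\le 2$; substituting this into the inequality gives $d_G(v)\ge 7-2d_F(v)\ge 3$, which is exactly $\delta(G)\ge 3$. For the second statement, if $d_G(v)\le 4$ then $2d_F(v)\ge 7-d_G(v)\ge 3$, so $d_F(v)\ge 2$, and together with $d_F(v)\le 2$ this forces $d_F(v)=2$.

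There is essentially no obstacle: the argument is a one-line corollary of Lemma~\ref{lem:totaldegree}, and the only point meriting a sentence of justification is that the minimal-counterexample hypothesis of that lemma is inherited here, which is immediate from the fact that planarity, $C_4$-freeness, and the property of carrying a linear spanning forest backbone are all preserved under deleting a vertex from both $G$ and $F$.
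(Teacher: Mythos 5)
Your argument is correct and is exactly the one the paper intends: the paper omits the proof, stating only that the lemma ``follows easily from Lemma~\ref{lem:totaldegree} and the fact that $F$ is a linear forest,'' and your write-up fills in precisely that computation ($d_G(v)+2d_F(v)\ge 7$ together with $d_F(v)\le 2$). Nothing further is needed.
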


\begin{lemma}\label{lem:forbiddenStructures}
Let $(G,F)$ be a minimal counterexample to Theorem \ref{main:forest}, and  $P$ be a heavy subpath of a component of $F$. The following hold.
\begin{enumerate}[(a)]
\item \label{FS1} If $P$ has one vertex $v$ of degree 3, then $d(u)=5$, $\forall u\in V(P)\setminus \{v\}$; 
\item \label{FS3} If $P$ contains a leaf of $F$, then $d(u)= 5$, $\forall u\in V(P)$; and
\item \label{FS2} $P$ has at most two vertices of degree 4.
\end{enumerate}
\end{lemma}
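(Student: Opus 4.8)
The plan is to prove each item by contradiction, in each case exhibiting a heavy subpath $P$ that can be eliminated from a minimal counterexample $(G,F)$ via repeated application of Lemma~\ref{2reduction}. In each case I start from a CBC-7-coloring $\psi$ of $(G-P,F-P)$ (which exists by minimality) and consider the pair $((G[V(P)],P),A_\psi)$; note that $P$ is a Hamiltonian path of $G[V(P)]$ by construction. The crucial observation is the initial list-size bound: for each $v\in V(P)$, since $P$ is heavy we have $d_G(v)\le 5$, and since $P$ removes exactly $d_F(v)\le 2$ backbone-neighbors and leaves $d_G(v)-d_F(v)\le d_G(v)-d_P(v)$ ordinary neighbors in $G-P$, a direct count gives $a_\psi(v)\ge 7-(d_G(v)-d_P(v))-2(d_F(v)-d_P(v))$. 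In particular, an internal vertex $v$ of $P$ has $d_P(v)=2$ and $d_F(v)=2$, so $a_\psi(v)\ge 7-(d_G(v)-2)=9-d_G(v)$; an endpoint $v$ of $P$ that is a leaf of $F$ has $d_P(v)=d_F(v)=1$, so again $a_\psi(v)\ge 9-d_G(v)$; and an endpoint of $P$ that is not a leaf of $F$ has $d_P(v)=1$, $d_F(v)=2$, so $a_\psi(v)\ge 8-d_G(v)$. Using Lemma~\ref{lem:degree}, every vertex of $P$ of degree $\le 4$ has $d_F=2$, hence is internal to its $F$-component, and the worst cases to track are degree-3 and degree-4 vertices.

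The engine is Lemma~\ref{2reduction}: it lets me peel off the first vertex $v_1$ of the current path provided $d(v_1)\le 4$, the current list $\ell(v_1)\ge 1+d(v_1)$, and (when $d(v_1)=4$) the two missing colors $c,d$ satisfy $|\adj c\cup\adj d|\le 5$. The reduction costs at most $2$ from the list of the new first vertex $v_2$ and at most $1$ from the lists of the other neighbors of $v_1$ inside $G[V(P)]$; but since $P$ is an \emph{induced} path only in the backbone sense, $v_1$ may have chords to other vertices of $P$ — however in all three statements the paths considered are short (two or three vertices, or one degree-3/leaf vertex plus degree-5 vertices), so after accounting for chords one still reaches the base case "a single vertex with nonempty list". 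For the degree-4 case I will need the side condition on $|\adj c\cup\adj d|\le 5$; this holds automatically unless $\{c,d\}$ is a "spread" pair like $\{i,i+3\}$ (equivalently $\{i,i+4\}$) in $[7]$, which forces $|\adj c\cup\adj d|=6$. So the real work in each item is: (i) show the heavy subpath is short enough that the peeling terminates with a positive list, and (ii) handle the degree-4 obstruction — either by showing the relevant missing-color pair is never a spread pair, or by choosing which endpoint of $P$ to peel from first.

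Concretely: For \ref{FS1}, suppose $P$ has a degree-3 vertex $v$ and also some vertex $u\ne v$ with $d(u)\le 4$. Since $d(v)=3,d(u)\le4$ both have $d_F=2$, both are internal, so $P$ has length $\ge 3$; I count list sizes — $a_\psi(v)\ge 9-3=6$, $a_\psi(u)\ge 9-4=5$ if $d(u)=4$ — and peel from an endpoint toward the interior, applying Lemma~\ref{2reduction} repeatedly; the only place the hypothesis is tight is at $u$ (list could drop to $1+4=5$ exactly, fine) and at $v$ (list stays $\ge 1+3$), and the degree-4 side condition at $u$ is the one delicate check. For \ref{FS3}, if $P$ contains a leaf $\ell_0$ of $F$ and some $u\in V(P)$ has $d(u)\le 4$, then $u$ is internal (so $P$ has length $\ge 2$), and I peel \emph{starting from the leaf end}, which is cheap since the leaf has $d_F=1$; I propagate toward $u$, again the degree-4 check at $u$ being the crux. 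For \ref{FS2}, if $P$ has three vertices $u_1,u_2,u_3$ of degree $4$ (not necessarily consecutive), all are internal with $d_F=2$; I peel from one endpoint and, each time I reach a degree-4 vertex $u_i$, I use that Lemma~\ref{2reduction}(\ref{3}) applies — and here is where I expect the main obstacle: when I hit the \emph{second or third} degree-4 vertex its list may have been shaved twice already (once as the "$v_2$" of a previous reduction, possibly once more as a chord-neighbor), so I must argue the list is still $\ge 5$, which forces me to bound how many of the $u_i$ can be mutually adjacent or share neighbors in $G[V(P)]$; this is exactly where planarity and the absence of $C_4$ enter, ruling out too many short chords among the $u_i$. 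I expect step \ref{FS2}, and within it the bookkeeping of list-decrements along a path with possible chords between degree-4 vertices, to be the hardest part; the degree-4 "spread pair" side condition of Lemma~\ref{2reduction} is a secondary but recurring technical nuisance that I will dispatch by a short case analysis on the positions of the two missing colors in $[7]$.
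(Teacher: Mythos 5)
Your overall strategy --- contradiction plus iterated application of Lemma~\ref{2reduction} starting from $((G[V(P')],P'),A_\psi)$ --- is indeed the paper's strategy, but the plan leaves unresolved exactly the two steps that carry the proof, and in one place rests on a false premise. First, the subpaths to be peeled are \emph{not} short: for parts~(\ref{FS1}) and~(\ref{FS3}) the paper takes a shortest subpath $P'=(v_1,\ldots,v_q)$ from a degree-$\le 4$ vertex to the degree-3 vertex (or leaf), and $q$ is unbounded, with arbitrarily many internal degree-5 vertices and possible chords. What makes the peeling terminate is not shortness but an explicit invariant maintained after every reduction ($\ell_i(v_j)\ge d_{H_i}(v_j)+2$ for internal vertices, $\ge d_{H_i}(v_i)+1$ for the current front vertex, $\ge d_{H_i}(v_q)+2$ for the terminal vertex); each reduction decrements a chord-neighbour's list and its $H_i$-degree by the same amount, so the invariant propagates. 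You gesture at ``accounting for chords'' but never state such an invariant, and your claim that the paths have ``two or three vertices'' is wrong.

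Second, the side condition of Lemma~\ref{2reduction}(\ref{3}) cannot be dispatched by ``a short case analysis on the positions of the two missing colors'': the missing colors depend on $\psi$ and could a priori be any pair. The paper's resolution is structural: if $d_{H_i}(v_i)=4$ for an internal vertex, then $d_G(v_i)=5$ and all of $v_i$'s neighbours lie in $H_i\cup\{v_{i-1}\}$, so $A_{i-1}(v_i)=[7]$; the only colors ever removed from that list are the at most two removed by the reduction on $v_{i-1}$, and the last bullet of the Reduction Rule guarantees precisely that this pair $\{c,d\}$ satisfies $\lvert\adj{c}\cup\adj{d}\rvert\le 5$. That is why the condition is built into the definition of a reduction, and your proposal never invokes this mechanism. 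Third, for part~(\ref{FS2}) your linear peel through all three degree-4 vertices fails at the terminal vertex: a degree-4 endpoint of $P'$ starts with only $\ell_1\ge 1+d_H$, and the accumulated decrements (one per chord plus two for the backbone predecessor) can drive its list to zero. The paper instead takes the three \emph{closest} degree-4 vertices $u,v,w$ and peels from both ends toward the middle one $v$, which starts with $\ell_1\ge 3+d_H(v)$ and is never itself reduced; meanwhile the obstacle you flag --- re-checking condition~(\ref{3}) at the later degree-4 vertices --- never actually arises, since a degree-4-in-$G$ vertex has $H_i$-degree at most~3 once any neighbour has been peeled. (Minor: for a leaf endpoint your own formula gives $a_\psi(v)\ge 8-d_G(v)$, not $9-d_G(v)$.)
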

\begin{proof}
Below, we consider a subpath $P'$ of $P$, and denote by $H$ the subgraph $G[V(P')]$. We prove that whenever $P$ does not satisfy one of the assertions, then, letting $\psi$ be a CBC-7-coloring of $(G-H,F-H)$, we get that $(H,P')$ is $A_\psi$-CBC-7-colorable, contradicting the fact that $(G,F)$ is a minimal counterexample to Theorem~\ref{main:forest}. We recall that, by Lemma~\ref{lem:degree}, we have $\delta(G)\ge 3$ and $d_G(u)\ge 5$ whenever $d_F(u)\le 1$.

First, suppose that either ($\ref{FS1}$) or ($\ref{FS3}$) does not hold, and let $P'=(v_1,v_2,\ldots,v_q)$ be a shortest subpath of $P$ such that $q\ge 2$, $d(v_1)\le 4$, and either $d(v_q)= 3$ or $v_q$ is a leaf in $P$. Also, let $\psi$ be a CBC-7-coloring of $(G-H,F-H)$. We construct a sequence $R_1,\ldots,R_q$ such that $R_1=((H,P'),A_\psi)$; $R_i$ is an extendable reduction of $R_{i-1}$ on $v_{i-1}$, for each $i\in \{2,\ldots,q\}$; and the list available for $v_q$ in $R_q$, say $A_q$, is nonempty. Observe that this leads to a contradiction since a coloring of $v_q$ with any $c\in A_q$ can be extended to an $A_\psi$-CBC-7-coloring of $(H,P')$ by the definition of extendable reduction. For each $i \in \{1,\ldots,q\}$ we write $R_i$ as $((H_i,P_i),A_i)$. Observe that $P_i = (v_i,\ldots,v_q)$ and that $H_i = H[\{v_i,\ldots,v_q\}]$, and denote by $\ell_i(v)$ the value $\lvert A_i(v)\rvert$, for each $v\in \{v_i,\ldots,v_q\}$. 
In order to obtain the desired sequence of extendable reductions, we want to apply Lemma~\ref{2reduction}. For this, we need to ensure that, at the beginning and after each step $i$ of the procedure, the inequalities below hold. 

\begin{equation}\ell_i(v_j) \ge d_{H_i}(v_j) + 2\mbox{, for every $j$ such that $i<j<q$.} \label{eq1}\end{equation}

\begin{equation}\ell_i(v_i) \ge d_{H_i}(v_i)+1\mbox{, if $i<q$}.\label{eq2}\end{equation} 

\begin{equation}
\ell_i(v_q)\ge
\left\{\begin{array}{ll}
d_{H_i}(v_q)+2 & \mbox{, if $i<q$}\\
                      1 & \mbox{, otherwise}
\end{array}\right.\label{eq3}\end{equation}

\begin{claim}
If Inequalities~(\ref{eq1}),~(\ref{eq2}), and~(\ref{eq3}) hold for $R_i$, with $1\le i<q$, then $R_i$ has an extendable reduction on $v_i$.
\end{claim}
\begin{claimproof}
Because $d(v_j)\le 5$, for every $v_j\in V(P')$, and by Inequality (\ref{eq2}), we get that Conditions (\ref{1}) and (\ref{2}) of Lemma \ref{2reduction} hold. Now, suppose that $d_{H_i}(v_i) = 4$. Recall that $d_G(v_1)\le 4$; hence $1<i<q$, which implies that $d_H(v_i) = 5$. But since $d_{H_i}(v_i)=4$, this means that $N_G(v_i) = N_{H_i}(v_i)\cup\{v_{i-1}\}$, which implies that $A_{i-1}(v_i) = [7]$. Then, Condition (\ref{3}) follows by the definition of reduction. 
\end{claimproof}

\vspace{1mm}
We first argument that these inequalities initially hold. Recall that $H_1=H$, $P_1=P'$, and $A_1=A_\psi$.
First, consider any $j\in \{2,\ldots,q-1\}$. Since $F$ is a linear tree, we have that $N_F(v_j)\subseteq P'$, which means that $\ell_1(v_j)\ge 7 - d_{G-H}(v_j) = 7 - (d_G(v_j) - d_H(v_j))$. By the choice of $v_1$ and $v_q$, we know that $d_G(v_j)= 5$, which in turn implies Inequality~(\ref{eq1}).
Now, by Lemma~\ref{lem:degree}, we know that $d_P(v_1) = 2$; so let $v\in N_P(v_1)\setminus\{v_2\}$. Note that $v$ forbids~3 colors for $v_1$, while each other colored neighbor of $v_1$ forbids just one color. This gives us that $\ell_1(v_1) \ge 7 - (d_{G-H}(v_1) + 2d_{P-P'}(v_1)) = 5 - (d_G(v_1) - d_H(v_1))\ge d_H(v_1)+1$. Analogously, for $v_q$ we get: if $d(v_q) = 3$, then $d_F(v_q) = 2$ and $\ell_1(v_q) \ge d_H(v_q)+2$; and if $v_q$ is a leaf in $P$, then by Lemma~\ref{lem:degree} we get $d_G(v_q)=5$, and as before $\ell_1(v_q) = 7 - d_{G-H}(v_q) \ge d_H(v_q)+2$.

Now, suppose that we are at step $i$ of our construction, $1\le i < q$, and let $R_{i+1}$ be an extendable reduction of $R_i$. We want to prove that Inequalities~(\ref{eq1}), (\ref{eq2}), and (\ref{eq3}) also hold for $R_{i+1}$.
First, note that if $v_j\in N(v_i)\setminus\{v_{i+1}\}$, then both $d_{H_{i+1}}(v_j)$ and $\ell_{i+1}(v_j)$ decrease by exactly 1; hence, Inequality~(\ref{eq1}) holds, as well as Inequality~(\ref{eq3}) in the case where $i<q-1$. Similarly, $d_{H_{i+1}}(v_{i+1})$ decreases by~1, while $\ell_{i+1}(v_{i+1})$ decreases by at most~2; hence, if $i<q-1$, we have that $\ell_i(v_{i+1})\ge d_{H_i}(v_{i+1})+2$, which means that Inequality~(\ref{eq2}) also holds for $R_{i+1}$. Finally, suppose that $i=q-1$. Then $\ell_{q-1}(v_q) \ge d_{H_{q-1}}(v_q)+2 = 3$, and by the definition of reduction we get that $\ell_q(v_q)\ge 1$, i.e., Inequality~(\ref{eq3}) holds also when $i=q-1$, and we are done proving~($\ref{FS1}$) and~($\ref{FS3}$).

Finally, in order to prove ($\ref{FS2}$), suppose that $d(v)\ge 4$, for every $v\in V(P)$, and let $u,v,w\in V(P)$ be the closest three vertices of degree 4 in $P$, where $v$ is between $u$ and $w$.  Write the subpath of $P$ between $u$ and $w$ as $P'=(v_1=u,v_2,\ldots,v_q=w)$ and let $v_p=v$. Denote $G[V(P')]$ by $H$, and let $\psi$ be a CBC-7-coloring of $(G-H,F-H)$. Note that:

\begin{itemize}
\item For each $z\in V(P')\setminus \{u,v,w\}$, we get $a_\psi(z) \ge 7 - d_{G-H}(z) = 7 - (d_G(z)-d_H(z)) = 2+d_H(z)$; 
\item For $z\in\{u,w\}$, we get $a_\psi(z) \ge 4 - (d_{G-H}(z)-1) = 1+d_{H}(z)$; and
\item $a_\psi(v) = 7 - d_{G-H}(v) = 3 + d_{H}(v)$.
 \end{itemize}

By arguments similar to the ones made for the first two cases, one can verify that a series of extendable reductions can be made on $P'$, from $v_1$ up to $v_{p-1}$, and from $v_q$ down to $v_{p+1}$, until we end up with just $v_p$ with non-empty list. 
\end{proof}


\subsection{Discharging Method}

In this section, we finish the proof of Theorem~\ref{main:forest}. For this, we use a definition similar to the one used in the proof of Theorem~\ref{thm:noadjacentC3}. We make an abuse of language and use the same nomenclature. 
%
Consider a plane graph $G$ and its dual $G^*$, and let $F_3$ be the set of faces of degree~3 in $G$ (alternatively, the set of vertices of degree~3 in $G^*$). We denote the graph $G^*-F_3$ by $G^*_5$, and say that a component of $G^*_5$ is an \emph{island of $G$}. Also, if $H$ is an acyclic component of $G^*_5$ such that $d_{G^*}(f) = 5$, for every $f\in V(H)$, then we say that $H$ is a \emph{bad island of $G$}. 
We denote the set of bad islands of $G$ by $\Gamma$ and we let $\gamma$ denote $\lvert \Gamma\rvert$. 
Also, for $v\in V(G)$, we denote by $\Gamma(v)$ the set of bad islands containing $v$, and by $\gamma(v)$ the value $\lvert \Gamma(v)\rvert$. If $X\subseteq V(G)$, then $\Gamma(X) = \bigcup_{x\in X}\Gamma(x)$, and $\gamma(X) = \lvert \Gamma(X)\rvert$. In the remainder of the text, although we refer to $G$ as being planar, we are implicitly considering a planar embedding of $G$ and its islands.

\begin{lemma}\label{mainlemma}
Let $G$ be a planar graph without cycles of lenght 4 as subgraph. Then, \[\lvert E(G)\rvert \le 2\lvert V(G)\rvert-4+\frac{\gamma}{3}.\]
\end{lemma}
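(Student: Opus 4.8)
The statement is a purely structural/combinatorial bound on planar graphs with no $C_4$: we must show $|E(G)|\le 2|V(G)|-4+\gamma/3$, where $\gamma$ counts the "bad islands" of $G$ (acyclic components of $G^*_5$ all of whose vertices have degree $5$ in $G^*$, i.e. correspond to $5$-faces). This is the linear-forest analogue of Lemma~\ref{mainlemma_mat}, so I expect the proof to mirror that one, working in the dual. Write $n=|V(G)|$, $m=|E(G)|$, $\mathcal F$ the face set, $f_3$ the number of $3$-faces, and $f_4$ the number of $4$-faces (which is $0$ since $G$ has no $C_4$ — actually $G$ has no $C_4$ as a subgraph, so there are no faces of degree $4$; this is the simplification relative to the matching case). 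By Euler, $n-m+|\mathcal F|=2$.

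Think about $\sum_{f\in\mathcal F}(|f|-5)=2m-5|\mathcal F|$. Since every face has degree $\ge 3$ and no face has degree $4$, a face contributes $-2$ if it is a $3$-face and $\ge 0$ otherwise, so $\sum_{f}(|f|-5)\ge -2f_3$. Combining with Euler gives $2m-5(2-n+m)\ge -2f_3$, i.e. $3m\le 5n-10+2f_3$. To finish I need to bound $f_3$ from above in terms of $m$, $n$, $\gamma$; the target inequality $m\le 2n-4+\gamma/3$ is equivalent (after eliminating via $3m\le 5n-10+2f_3$) to something like $2f_3\le m-2n+\frac{2\gamma}{3}+?$ — more cleanly, I expect to prove a "dual edge-counting" inequality analogous to Inequality~\eqref{claim1}, of the form
\begin{equation}
3f_3 + f_5 \le m + \tfrac{\gamma}{3}\cdot(\text{something}),
\end{equation}
or better, partition $E(G)$ into $E_3$ (edges on the boundary of some $3$-face) and $\overline E_3=E(G)\setminus E_3$, and count each side. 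Since $G$ has no $C_4$ and hence certainly no two $3$-faces sharing an edge, $|E_3|=3f_3$. For $\overline E_3$: each such edge is dual to an edge of $G^*_5$, and conversely each edge of $G^*_5$ comes from an edge of $G$ separating two faces of degree $\ge 5$, hence lies in $\overline E_3$; so $|\overline E_3|=|E(G^*_5)|\ge |V(G^*_5)|-(\text{number of acyclic components of }G^*_5)$. The number of acyclic components of $G^*_5$ is at most... and here is where $\gamma$ enters, but I must be careful: $\gamma$ counts only the acyclic components all of whose vertices have $G^*$-degree exactly $5$, whereas a general acyclic component of $G^*_5$ may contain a vertex whose $G^*$-degree is $\ge 6$, and such a component can be charged against that high-degree face. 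This is exactly the discrepancy between $f_5$ and $|V(G^*_5)|$, and the factor $1/3$ should come from the fact that a bad island, being a tree in $G^*_5$, has a leaf, and that leaf corresponds to a $5$-face adjacent to $\ge 3$ faces of $F_3$ (as argued in the proof of Theorem~\ref{thm:noadjacentC3}), so each bad island can "borrow" at least $3$ units of slack — translating into the $\gamma/3$ term after the arithmetic with $f_3$.

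So the concrete steps, in order: (1) set up $n,m,\mathcal F,f_3$ and Euler; (2) observe $G$ has no $4$-face, derive $3m\le 5n-10+2f_3$ from the face-degree sum; (3) partition $E(G)=E_3\sqcup\overline E_3$, get $|E_3|=3f_3$ exactly (no two $3$-faces share an edge, since no $C_4$), and $|\overline E_3|=|E(G^*_5)|$; (4) bound $|E(G^*_5)|\ge |V(G^*_5)|-a$ where $a$ is the number of acyclic components of $G^*_5$, and bound $|V(G^*_5)|\ge f_3'+f_{\ge5}$ appropriately, i.e. relate $|V(G^*_5)|$ to $|\mathcal F|-f_3$; (5) the crucial step — show that the deficit incurred by the acyclic components is controlled by $\gamma$, using that a bad island (being a tree on $\ge 1$ vertices in $G^*_5$) has a leaf or is a single vertex forming an induced $5$-cycle in $G$, hence is incident to at least $3$ distinct $3$-faces, which lets us amortize with weight $1/3$ against $f_3$; (6) combine (2)–(5) and solve for $m$. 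The main obstacle is step (5)/(6): getting the bookkeeping between $f_3$, the number of acyclic components of $G^*_5$, and $\gamma$ to produce exactly the coefficient $1/3$ of $\gamma$ and the constant $-4$ — in particular handling components of $G^*_5$ that are acyclic but not "bad" (they touch a face of degree $\ge 6$) separately from the bad ones, and making sure the "induced $5$-face adjacent to $\ge 3$ triangles" argument is valid given only the $C_4$-free hypothesis (two $3$-faces can meet in at most one vertex, so a $5$-face cannot be surrounded by fewer than $3$ triangles if it is to be a singleton bad island). I expect this to be a short discharging-free counting argument once the partition of the edge set is in place.
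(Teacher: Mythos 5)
Your core strategy is the paper's: partition $E(G)$ into $E_3$ (edges on a $3$-face) and $\overline{E}_3$, use the $C_4$-freeness to get $\lvert E_3\rvert=3f_3$, identify $\overline{E}_3$ with $E(G^*_5)$, lower-bound $\lvert E(G^*_5)\rvert$ by $\lvert V(G^*_5)\rvert$ minus the number of acyclic components, and absorb the non-bad acyclic components into the surplus $\lvert V(G^*_5)\rvert-f_5=f_{\ge 6}$. Your observation that a non-bad acyclic component contains a face of degree $\ge 6$ and can be charged against the discrepancy between $f_5$ and $\lvert V(G^*_5)\rvert$ is exactly the right repair (and is in fact stated more carefully than in the paper, which writes $\lvert\overline{E}_3\rvert\ge\lvert V(G^*_5)\rvert-\gamma$ directly). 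This yields the key inequality $3f_3+f_5\le m+\gamma$, which is the paper's Inequality~(\ref{claim}).

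However, two concrete points in your plan would derail the finish. First, your step~(2) normalizes the face-degree sum at $5$: from $\sum_f(\lvert f\rvert-5)\ge -2f_3$ and Euler you get $3m\le 5n-10+2f_3$, but this discards the $f_5$ term, and combining it with $3f_3+f_5\le m+\gamma$ only yields $m\le\frac{15n-30+2\gamma}{7}$, which is weaker than the claimed $2n-4+\frac{\gamma}{3}$. You must normalize at $6$: since there are no $4$-faces, $\sum_f(\lvert f\rvert-6)\ge -3f_3-f_5\ge -m-\gamma$, and with $\sum_f(\lvert f\rvert-6)=2m-6(m-n+2)$ this gives $3m\le 6n-12+\gamma$ immediately. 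Second, your step~(5) looks for the factor $\frac{1}{3}$ in a ``bad island is incident to at least three triangles'' argument borrowed from the proof of Theorem~\ref{thm:noadjacentC3}; no such argument is needed here. The coefficient $\frac{1}{3}$ of $\gamma$ (and the constant $-4$) comes purely from dividing $3m\le 6n-12+\gamma$ by $3$; the only role of $\gamma$ in the counting is the loss of one dual edge per acyclic component of $G^*_5$. With the normalization corrected and step~(5) replaced by this arithmetic, your outline closes.
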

\begin{proof}
Let $f_3,f_5$ denote the number of faces of degree 3 and 5, respectively, and let $\lvert E(G)\rvert$ be denoted by $m$. Also, denote by ${\cal F}$ the set of faces of $G$ and by $\lvert f\rvert$ the degree of a face $f\in {\cal F}$. We claim that: \begin{equation}\label{claim} 3f_3+f_5\le m+\gamma\end{equation}
This implies that $t = \sum_{f\in {\cal F}}(\lvert f\rvert - 6) \ge -3f_3-f_5\ge -m-\gamma$. On the other hand $t = \sum_{f\in {\cal F}}(\lvert f\rvert) - 6\lvert {\cal F}\rvert = 2m - 6\lvert {\cal F}\rvert$. Combining these and applying Euler's Formula we get (below, $n$ denotes $\lvert V(G)\rvert$):
$$ 2m - 6(2-n+m)\ge -m-\gamma \Longleftrightarrow m\le 2n-4+\frac{\gamma}{3}$$

It remains to prove Inequality~\ref{claim}. For this, we partition $E(G)$ in $E_3,\overline{E}_3$, where $E_3$ is described below and $\overline{E}_3 = E(G)\setminus E_3$.
$$E_3 = \{e\in E(G)\mid \mbox{ $e$ is contained in some face of degree 3}\}.$$

Because $G$ has no cycle of length 4, we trivially get that $\lvert E_3\rvert = 3f_3$. We prove that $\lvert \overline{E}_3\rvert \ge f_5 - \gamma$, thus finishing the proof. For this, note that if $e\in \overline{E}_3$, then there is an edge $e^*$ in $G^*_5$ related to $e$. On the other hand, if $e^*\in E(G^*_5)$, then $e^*$ is related to an edge $e\in E(G)$ that separates faces of degree at least 5; hence, $e\in \overline{E}_3$. Therefore, $\lvert \overline{E}_3\rvert = \lvert E(G^*_5)\rvert$. Finally, because the number of edges in any graph is at least the number of vertices minus the number of acyclic components of the graph, we get:
$$\lvert \overline{E}_3\rvert \ge \lvert V(G^*_5)\rvert - \gamma \ge f_5-\gamma.$$
\end{proof}

Supposing that $(G,F)$ is a minimal counterexample to Theorem~\ref{main:forest}, we apply the discharging method to prove that $\sum_{v\in V(G)}d(v) \ge 4\lvert V(G)\rvert + \frac{2\gamma}{3}$, contradicting Lemma~\ref{mainlemma}.
For this, we start by giving charge $d(v)-4$ to every $v\in V(G)$, and charge $-2/3$ to every bad island. The discharging rules ensure that every vertex and every bad island end up with a non-negative charge (i.e., Property \ref{property} below holds), which clearly contradicts Lemma~\ref{mainlemma}. The rules are applied in the order they are presented. Also, given $x\in V(G)\cup \Gamma$, the initial charge of $x$ is denoted by $\mu_0(x)$, and the charge of $x$ after Rule $i$ is applied is denoted by $\mu_i(x)$, for each $i\in \{1,\ldots, 5\}$. 

\begin{property}\label{property}
 After Rule $i$ is applied, we have that $\mu_i(v)\ge 0$ and $\mu_i(b)\ge 0$, for every vertex $v$ iterated in Rule~$i$ and every bad island $b$ containing $v$.
 \end{property}

The proof following each rule is a proof that Property~\ref{property} holds after the corresponding rule has been applied.

\setcounter{Drule}{0}
\begin{Drule}\label{rule1}
For every $v\in V(G)$ with $d(v)\ge 6$, send $2/3$ from $v$ to each $b\in\Gamma(v)$.
\end{Drule}
\begin{proof}
Consider $v\in V(G)$ with $d(v)\ge 6$. Because every island containing $v$ receives $2/3$, we just need to prove that $\mu_1(v)\ge 0$. Because $G$ has no cycles of length~4, observe that $\gamma(v)\le \frac{d(v)}{2}$. This gives us that: 
\begin{equation}\label{sobra}
\mu_1(v)\ge d(v) - 4 - \frac{2}{3}\gamma(v)\ge d(v) - 4 - \frac{2}{3}\cdot\frac{d(v)}{2}\ge \frac{2}{3}d(v) - 4 \ge 0.
\end{equation}
\end{proof}

The following proposition will be useful in the remainder of the text. Observe that it holds because at least one face containing $uv$ cannot be a a face of degree~3, as otherwise we get a cycle of length~4.

\begin{proposition}\label{prop:gamma}
If $G$ is a graph without cycles of length~4, and $uv\in E(G)$, then there exists a face of degree greater than~3 containing $uv$.
\end{proposition}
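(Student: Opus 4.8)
\textbf{Proof proposal for Proposition~\ref{prop:gamma}.}
The plan is a short case analysis on how the edge $uv$ sits in the fixed planar embedding of $G$. Recall that in a plane graph every edge lies on the boundary of at most two faces, and that a face of degree~$3$ in a simple plane graph is bounded by a triangle (a $3$-cycle) of $G$. Suppose, for contradiction, that every face incident to $uv$ has degree~$3$.

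First I would treat the case in which $uv$ lies on the boundary of two distinct faces $f_1 \neq f_2$. Since both have degree~$3$, each $f_i$ is bounded by a triangle through $uv$, say with third vertex $w_i$, so that $uw_i, vw_i \in E(G)$. If $w_1 = w_2 =: w$, then the $3$-cycle $uvw$ bounds both $f_1$ and $f_2$, which forces $G=K_3$; in the setting where the proposition is applied ($G$ a minimal counterexample with $\delta(G)\ge 3$, or simply $G$ with at least one other vertex) this does not occur, so $w_1 \neq w_2$. But then $u\,w_1\,v\,w_2\,u$ is a cycle of length~$4$ in $G$, contradicting the hypothesis. Hence at least one of $f_1,f_2$ has degree at least~$4$.

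Next I would dispose of the case in which $uv$ lies on the boundary of only one face $f$, i.e. $uv$ is a bridge of $G$. A bridge lies on no cycle of $G$, hence in particular on no triangle, so $f$ cannot be bounded by a triangle and therefore $\lvert f\rvert \neq 3$; since $G$ is simple (and not $K_2$ in our context), $\lvert f\rvert \ge 4$. This again produces a face of degree greater than~$3$ incident to $uv$, completing the proof.

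I do not expect any real obstacle here: the argument is elementary once the embedding is fixed, and the only mildly delicate point is ruling out the degenerate possibility $w_1=w_2$ (equivalently $G=K_3$) and the trivial graph $K_2$, both of which are excluded in every application of the proposition within this paper. One could, if desired, state the proposition with the standing assumption that $G$ is a simple plane graph with $\delta(G)\ge 3$, which removes these corner cases entirely and leaves only the $C_4$-creating argument of the two-face case.
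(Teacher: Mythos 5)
Your argument is correct and is essentially the paper's own justification: the paper dispatches this proposition with the one-line observation that two degree-3 faces sharing $uv$ would yield two triangles through $uv$ and hence a $C_4$. You simply spell out the same idea more carefully, adding the degenerate cases ($G=K_3$ and the bridge case) that the paper leaves implicit.
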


\begin{Drule}\label{rule2}
Let $P = (v_1,\ldots,v_q)$ be a heavy subpath containing no vertex with degree smaller than 5. We have the following cases:
\begin{itemize}
\item[R2.1]\label{R21} If $P$ is a component of $F$, send charge $2/3$ from $\mu_1(v_1)+\mu_1(v_2)$ to every $b\in \Gamma(\{v_1,v_2\})$. After this, if $q\ge 3$, then for each $i\in\{3,\ldots,q\}$, send charge $2/3$ from $v_i$ to  $b\in \Gamma(v_i)\setminus\Gamma(v_{i-1})$, . 

\item[R2.2]\label{R22} Otherwise, let $v_0\in N_F(v_1)\setminus \{v_2\}$. For every $i\in\{1,\ldots,q\}$, send charge $2/3$ from $v_i$ to $b\in \Gamma(v_i)\setminus\Gamma(v_{i-1})$. 
\end{itemize}
\end{Drule}
\begin{proof}
First, note that $\mu_1(v_i)=1$, for every $i\in\{1,\ldots,q\}$. Suppose that $P$ is a component of $F$. Note that Lemma~\ref{lem:totaldegree} implies that $q\ge 2$. By Proposition~\ref{prop:gamma}, we get that $\gamma(\{v_1,v_2\})\le 3$, and that, when $q\ge 3$, then for every $i\in \{3,\ldots, q\}$ we get $\lvert \Gamma(v_i)\setminus\Gamma(v_{i-1})\rvert\le 1$. Property \ref{property} follows.

Now, suppose that $P$ is not a component of $G$, in which case we can suppose, without loss of generality, that $v_0$ exists. By the definition of heavy path, we know that $d(v_0)\ge 6$, which, by Rule \ref{rule1}, implies that the island in $\Gamma(v_0)\cap \Gamma(v_1)$ has non-negative charge. Now, applying Proposition~\ref{prop:gamma}, for each $v_i\in V(P)$ we get that $\lvert \Gamma(v_i)\setminus\Gamma(v_{i-1})\rvert \le 1$. Hence, Property \ref{property} follows.
\end{proof}

\begin{Drule}\label{rule3}
Let $P = (v_1,\ldots,v_q)$ be a heavy subpath containing exactly one vertex with degree smaller than 5, namely $v_p$, and let $v_0\in N_F(v_1)\setminus P$ and $v_{q+1}\in N_F(v_q)\setminus P$. We have the following cases.
 \begin{itemize}
\item[R3.1]\label{R31} If $q\ge 2$, we can suppose that $p<q$. 
\begin{enumerate}[(i)]
 \item Send charge $2/3$ from $v_i$ to $b\in\Gamma(v_i)\setminus\Gamma(v_{i-1})$, for each $i\in\{1,\ldots,p-1\}$;
 \item Send charge $2/3$ from $v_i$ to $b\in\Gamma(v_i)\setminus\Gamma(v_{i+1})$, for each $i\in\{p+2,\ldots,q\}$;
 \item If $d(v_p)=3$, then $v_{p+1}$ sends charge $1$ to $v_p$. Otherwise, $v_{p+1}$ sends charge $2/3$ to $b\in \Gamma(v_p)\cap \Gamma(v_{p+1})$.
\end{enumerate}

\item[R3.2] \label{R32}If $q=1$ and $d(v_1)=3$, let $b\in \Gamma(v_1)$. Send charge 1 from $\mu_2(v_0) +\mu_2(v_2)+\mu_2(b)$ to $v_1$. 
\end{itemize}
\end{Drule}
\begin{proof}
By Lemma \ref{lem:forbiddenStructures}, we know that $v_0$ and $v_{q+1}$ exist, and, by Rule \ref{rule1}, we know that the islands in $\Gamma(v_0)\cap \Gamma(v_1)$ and $\Gamma(v_q)\cap \Gamma(v_{q+1})$ have non-negative charge. First, suppose that $q\ge 2$. By arguments similar to the ones in the previous demonstrations, one can see that the vertices in $\{v_1,\ldots,v_{p-1},v_{p+2},\ldots,v_q\}$, as well as the islands containing them, have non-negative charge. Also, note that, by Proposition \ref{prop:gamma}, either $d(v_p)=3$ and the only island containing $v_p$ also contains $v_{p-1}$ and $v_{p+1}$, or $d(v_p) = 4$ and the island in $\Gamma(v_p)\cap \Gamma(v_{p+1})$ is the only one that 	might not be satisfied yet. In either case, one can verify that the rule satisfies $v_p$ or the refered island, depending on the case. 

Now, suppose that $q=p=1$. If $d(v_1)=4$, then $\Gamma(v_1)\subseteq \Gamma(v_0)\cup\Gamma(v_2)$ and nothing needs to be done; so suppose otherwise. 
First note that, because $d(v_1)=3$, the island $b\in \Gamma(v_1)$ also contains $v_0$ and $v_2$. This means that $b$ has received charge from both $v_0$ and $v_2$ when Rule~\ref{rule1} is applied; hence $\mu_2(b)=2/3$. We end the proof by showing that $\mu_2(v_2)=\mu_1(v_2)\ge 2/3$. 
Note that, since $d(v_1)=3$ and because $G$ has no cycle of length 4, we can suppose that $v_1$ has no common neighbor with $v_2$. 
Therefore, if $d(v_2) = 6$, then $\gamma(v_2)=2$, and applying the first part of Inequality~\ref{sobra}, we get that $\mu_2(v_2) = 6-4-4/3=2/3$. On the other hand, if $d(v_2)\ge 7$, we get $\mu_2(v)\ge 2/3$ by Inequality~\ref{sobra}.
 \end{proof}

In the next discharging rule, given $X\subseteq V(G)$, we denote $\sum_{v\in X}\mu_3(x)$ by $\mu_3(X)$.

\begin{Drule}\label{rule4}
Let $P = (v_1,\ldots,v_\ell)$ be a heavy subpath containing exactly two vertices with degree smaller than 5, namely $v_p$ and $v_q$, $p<q$. Let $v_0\in N_F(v_1)\setminus P$ and $v_{\ell+1}\in N_F(v_\ell)\setminus P$. Define
$$\beta = \Gamma(V(P))\setminus \Gamma(\{v_0,v_{\ell+1}\})\mbox{, and}$$
$$\mu = \mu_3(V(P)) + \frac{2}{3}\lvert \Gamma(v_0)\cap \Gamma(v_{\ell+1})\rvert.$$

If $\mu \ge \frac{2}{3}\cbeta$, then send $2/3$ from $V(P)$ and $\Gamma(v_0)\cap \Gamma(v_{\ell+1})$ to each $b\in \beta$.
\end{Drule}

By the condition under which it is applied, Rule \ref{rule4} clearly satisfies Property~\ref{property}. However, we still need a final rule for the paths on which the condition $\mu\ge \frac{2}{3}\cbeta$ does not hold. Before we present the rule, we give sufficient conditions for Rule \ref{rule4} to be applied.

\begin{lemma}\label{lem:rule4}
If $P$ is a heavy subpath containing exactly two vertices with degree smaller than~5, and either $\lvert V(P)\rvert \ge 4$, or $\gamma(V(P))\le \lvert V(P)\rvert$, then $\mu\ge\frac{2}{3}\cbeta$.
\end{lemma}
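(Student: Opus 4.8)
First I would fix notation: write $\ell=|V(P)|$, $P=(v_1,\dots,v_\ell)$ and $c=|\Gamma(v_0)\cap\Gamma(v_{\ell+1})|$, so $\mu=\mu_3(V(P))+\frac{2}{3}c$. The plan begins by reading off the structure forced by the hypotheses. As $P$ is a heavy subpath with exactly two vertices of degree smaller than $5$, Lemma~\ref{lem:forbiddenStructures}(\ref{FS1}) forbids one of them from having degree $3$, so both have degree exactly $4$; by Lemma~\ref{lem:forbiddenStructures}(\ref{FS3}), $P$ contains no leaf of $F$, so the vertices $v_0\in N_F(v_1)\setminus P$ and $v_{\ell+1}\in N_F(v_\ell)\setminus P$ exist, and by maximality of $P$ they satisfy $d(v_0),d(v_{\ell+1})\ge6$. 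Next I would compute $\mu_3(V(P))$. Rule~\ref{rule1} only moves charge out of vertices of degree $\ge6$; Rules~\ref{rule2} and~\ref{rule3} add charge only to degree-$3$ vertices and remove charge only from vertices of the heavy subpath to which they apply or from its degree-$\ge6$ endpoints. Since $P$ has two vertices of degree $<5$, no rule of type~\ref{rule2} or~\ref{rule3} applies to $P$, and since heavy subpaths are vertex-disjoint, no $v_i\in V(P)$ is affected; hence $\mu_3(v_i)=\mu_0(v_i)=d(v_i)-4$ and $\mu_3(V(P))=\sum_{i=1}^{\ell}(d(v_i)-4)=\ell-2$. Thus the lemma reduces to proving $\ell-2+\frac{2}{3}c\ge\frac{2}{3}\cbeta$.

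The core step is a bound on $\cbeta$. For each $i$ put $x=v_{i-1}$ (with $x=v_0$ when $i=1$). By Proposition~\ref{prop:gamma}, the edge $xv_i$ lies in a face $f$ of degree $>3$, belonging to a unique island $J$. If $J$ is a bad island then $J\in\Gamma(x)\cap\Gamma(v_i)$, and as $\gamma(v_i)\le\lfloor d(v_i)/2\rfloor\le2$ we get $|\Gamma(v_i)\setminus\Gamma(x)|\le1$. If $J$ is not bad then the maximal run of non-triangular faces around $v_i$ containing $f$ is a non-bad run; since, by Proposition~\ref{prop:gamma}, no two consecutive faces around $v_i$ are triangles and $d(v_i)\le5$, there are at most two such runs around $v_i$, so at most one of them is a bad-island run and $\gamma(v_i)\le1$, again giving $|\Gamma(v_i)\setminus\Gamma(x)|\le1$. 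Telescoping $\Gamma(v_1)\cup\dots\cup\Gamma(v_\ell)$ along $P$ then yields $\gamma(V(P))\le\ell+1$ and $\cbeta\le\ell$. Running the same dichotomy at the two end-edges gives moreover: either the non-triangular face on $v_0v_1$ lies in a bad island $X_0$, which then belongs to $\Gamma(V(P))\cap\Gamma(v_0)$, or $\gamma(v_1)\le1$; and symmetrically for $v_\ell v_{\ell+1}$, $\Gamma(v_{\ell+1})$ and an island $X_\ell$.

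It then remains to conclude by cases on $\ell$. For $\ell\ge6$, $\cbeta\le\ell\le\frac{3}{2}(\ell-2)$, so $\ell-2\ge\frac{2}{3}\cbeta$. For $\ell\in\{4,5\}$: if $X_0$ and $X_\ell$ both exist and are distinct, they are two distinct members of $\Gamma(V(P))\cap\Gamma(\{v_0,v_{\ell+1}\})$, so $\cbeta\le\gamma(V(P))-2\le\ell-1\le\frac{3}{2}(\ell-2)$; if $X_0=X_\ell$, this common island lies in $\Gamma(v_0)\cap\Gamma(v_{\ell+1})$, so $c\ge1$ and $\cbeta\le\gamma(V(P))-1\le\ell\le\frac{3}{2}(\ell-2)+1$; and if one of $X_0,X_\ell$ fails to exist --- say the one on $v_0v_1$ --- then $\gamma(v_1)\le1$, and I would carry this bound along $P$ (at each edge of $P$ the non-triangular face is either already counted, so a vertex drops from the tally, or lies in a non-bad island, so ``$\gamma\le1$'' propagates) to conclude $\cbeta\le\ell-1$. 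Finally, for $\ell\in\{2,3\}$ we are in case (ii), so $\gamma(V(P))\le\ell$; rewriting $\cbeta=\gamma(V(P))-|\Gamma(V(P))\cap\Gamma(\{v_0,v_{\ell+1}\})|$, the target becomes $\gamma(V(P))\le\frac{3}{2}(\ell-2)+c+|\Gamma(V(P))\cap\Gamma(\{v_0,v_{\ell+1}\})|$, which I would verify by the same end-edge dichotomy --- each bad island at an end-edge feeds the right-hand side, while a non-bad end-face lowers $\gamma(v_1)$ (or $\gamma(v_\ell)$) by one. The main obstacle is precisely this bookkeeping for small $\ell$: the slack $\ell-2\in\{0,1,2,3\}$ is so tight that every bad island of $\beta$ must be charged either to a degree-$5$ vertex of $P$ or to a bad island shared with one of the degree-$\ge6$ endpoints $v_0,v_{\ell+1}$, and forcing all those shared islands to appear --- while ruling out the extremal face configurations --- requires a rather delicate analysis of the runs of non-triangular faces around $v_1$, $v_\ell$ and their neighbours, used together with $d(v_i)\le5$ and the fact that no edge of $G$ lies in two triangular faces.
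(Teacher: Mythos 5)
Your skeleton is the same as the paper's: both arguments use Lemma~\ref{lem:forbiddenStructures} to get $d(v_p)=d(v_q)=4$ and hence $\mu_3(V(P))=\ell-2$, Proposition~\ref{prop:gamma} to get $\gamma(V(P))\le\ell+1$, and the identity $\cbeta=\gamma(V(P))-t$ with $t=\lvert\Gamma(V(P))\cap\Gamma(\{v_0,v_{\ell+1}\})\rvert$, followed by arithmetic case analysis. The difficulty is that the cases you defer are precisely the ones carrying the content of the lemma. For $\ell\in\{2,3\}$ your reformulated target $\gamma(V(P))\le\tfrac{3}{2}(\ell-2)+c+t$ reads, at $\ell=2$, simply $\gamma(V(P))\le c+t$, and nothing you have established implies it; you explicitly hand it off to ``the same end-edge dichotomy'' and concede it needs a delicate analysis you do not supply. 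The paper closes exactly these cases with one observation you never isolate: $t\ge1$, and when $t=1$ the island in $\Gamma(v_0)\cap\Gamma(v_1)$ and the island in $\Gamma(v_\ell)\cap\Gamma(v_{\ell+1})$ must be one and the same, hence it lies in $\Gamma(v_0)\cap\Gamma(v_{\ell+1})$ and $c\ge1$; this gives $\cbeta\le\gamma(V(P))-1\le\ell-1$ together with $\mu\ge\ell-\tfrac{4}{3}\ge\tfrac{2}{3}(\ell-1)$ for every $\ell\ge2$, and the case $t\ge2$ is then immediate. You invoke this coincidence only in your subcase $X_0=X_\ell$, i.e.\ only when both end-faces happen to lie in bad islands, which is not the general situation.

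There is a second concrete hole in the $\ell\in\{4,5\}$ subcase where an end-face is non-bad. There you claim $\cbeta\le\ell-1$, but $\gamma(v_1)\le1$ only improves the telescoping to $\gamma(V(P))\le\ell$; if the face on $v_\ell v_{\ell+1}$ is also non-bad you may have $t=0$, leaving only $\cbeta\le\ell$, which does not give $\ell-2\ge\tfrac{2}{3}\cbeta$ for $\ell\in\{4,5\}$. The parenthetical ``carry this bound along $P$'' does not demonstrate that an extra unit is saved at the far end: the last telescoping step $\lvert\Gamma(v_\ell)\setminus\Gamma(v_{\ell-1})\rvert$ can still equal $1$ when the two non-bad faces at $v_\ell$ sit in the same run and a separate bad run survives. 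To complete the proof you must either establish the paper's two claims ($t\ge1$, and the $t=1$ coincidence) or show directly that a non-bad end-face always costs the telescoping one further unit; as written, the proposal proves the lemma only for $\ell\ge6$, and for $\ell\in\{4,5\}$ only when both end-edges lie in bad islands.
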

\begin{proof}
Consider $P,v_p,v_q,v_0,v_{\ell+1},\beta,\mu$ be all defined as in Rule \ref{rule4} (recall that $v_0,v_{\ell+1}$ exist by Lemma~\ref{lem:forbiddenStructures}). First note that 
$$\lvert \beta \rvert = \gamma(V(P)) - \lvert \Gamma(V(P))\cap \Gamma(\{v_0,v_{\ell+1}\}\rvert.$$ 
Also, by Proposition~\ref{prop:gamma}, we have 
$$\gamma(V(P))\le 2\ell - (\ell-1) = \ell+1.$$ 
Finally, by Lemma~\ref{lem:forbiddenStructures}, we get that $d(v_p)=d(v_q)=4$, and $d(v_i)=5$, for every $v_i\in V(P)\setminus\{v_p,v_q\}$. Hence $$\mu_3(V(P))=\ell-2.$$

Now, denote by $t$ the value $\lvert \Gamma(V(P))\cap \Gamma(\{v_0,v_{\ell+1}\})\rvert$. By Proposition \ref{prop:gamma}, we know that $t\ge 1$. We analyse the following cases:
\begin{itemize}
 \item If $t=1$, then the islands in $\Gamma(v_0)\cap \Gamma(v_1)$ and $\Gamma(v_\ell)\cap \Gamma(v_{\ell+1})$ must be the same, i.e., $\Gamma(v_0)\cap \Gamma(v_{\ell+1})\neq \emptyset$, and $\cbeta = \gamma(V(P))-1$. Therefore, $$\mu \ge \mu_3(V(P))+\frac{2}{3} = \ell-2+\frac{2}{3} = \ell - \frac{4}{3}.$$
  If $\ell\ge 4$, then $\cbeta\le \ell$ and $\mu\ge \ell-\frac{4}{3}\ge \frac{2}{3}\ell\ge \frac{2}{3}\cbeta$. And if $\gamma(V(P))\le \ell$, then $\cbeta\le \ell-1$, and, since $\ell\ge 2$, we get $\mu = \ell - \frac{4}{3} \ge \frac{2}{3}(\ell - 1)\ge \frac{2}{3}\cbeta$.
 
 \item Now, if $t\ge 2$ and $\ell\ge 4$, then $\cbeta \le \ell-1$, and $\mu\ge \ell-2\ge \frac{2}{3}(\ell-1)\ge \frac{2}{3}\cbeta$. Finally, if $t\ge 2$ and $\gamma(V(P))\le \ell$, then $\cbeta\le \ell-2$ and clearly $\mu\ge \ell-2\ge \cbeta\ge \frac{2}{3}\cbeta$.
\end{itemize}
\end{proof}

Now, consider $P$ as in Rule \ref{rule4} and suppose that the rule is not applied, which means that there might still exist some bad island intersecting $V(P)$ with negative charge. If such an island exists, we call such a path \emph{defective}. Before we present the last discharging rule, we need the lemmas below. We mention that by Lemma~\ref{lem:rule4}, if $P$ is defective then $\ell \le 3$ and $\gamma(V(P))\ge \ell+1$, where $\ell=\lvert V(P)\rvert$.

\begin{lemma}\label{lem:defectivepath}
Let $P$ be a defective path of size $\ell$ with extremities $v_1$ and $v_\ell$, and denote by $v_2$ the neighbor of $v_1$ in $P$ (hence, it might happen that $\ell=2$). Also, let $v_0\in N_F(v_1)\setminus\{v_2\}$, and $v_{\ell+1}\in N_F(v_\ell)\setminus\{v_{\ell-1}\}$.
Then, for each $i\in \{1,2,\ell\}$, we have that $v_i$ is contained in exactly two bad islands (which means that $v_i$ is contained in two 3-faces that separate these bad islands), and $v_{i-1}v_{i+1}\notin E(G)$. 
\end{lemma}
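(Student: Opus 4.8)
The plan is to prove the two assertions of the lemma separately: the first by a short double‑count of bad islands along $P$, the second by a local analysis of the faces around $v_i$ using the absence of $C_4$. Before starting I would collect what is already available. By the remark preceding the statement (a consequence of Lemma~\ref{lem:rule4} and its proof), a defective path $P$ has $\ell:=\lvert V(P)\rvert\le 3$ and $\gamma(V(P))\ge \ell+1$; by Lemma~\ref{lem:forbiddenStructures}, $P$ has exactly two vertices of degree~$4$ (the two vertices of $P$ of degree less than~$5$) and all remaining vertices of degree~$5$, and $v_0,v_{\ell+1}$ exist and have degree at least~$6$. I will use three elementary consequences of $C_4$‑freeness: around any vertex no two $3$‑faces are consecutive (Proposition~\ref{prop:gamma}), so $\gamma(u)\le\lfloor d(u)/2\rfloor$; no edge lies on two distinct $3$‑faces; and, since $\delta(G)\ge 3$, two distinct faces cannot share two edges incident to a common vertex.

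For the first assertion, note that $\gamma(u)\le\lfloor d(u)/2\rfloor\le 2$ for every $u\in V(P)$ because $d(u)\le 5$, and that for each edge $v_jv_{j+1}$ of $P$ Proposition~\ref{prop:gamma} yields a face of degree at least~$5$ through it, which, as in the proof of Lemma~\ref{lem:rule4}, lies in a bad island meeting both $v_j$ and $v_{j+1}$; hence $\lvert\Gamma(v_j)\cap\Gamma(v_{j+1})\rvert\ge 1$. Inclusion--exclusion along $P$ then gives
\[
\gamma(V(P))=\Big\lvert\bigcup_{j=1}^{\ell}\Gamma(v_j)\Big\rvert\le\sum_{j=1}^{\ell}\gamma(v_j)-\sum_{j=1}^{\ell-1}\lvert\Gamma(v_j)\cap\Gamma(v_{j+1})\rvert\le 2\ell-(\ell-1)=\ell+1 .
\]
Since $\gamma(V(P))\ge\ell+1$, every inequality above is tight; in particular $\gamma(v_j)=2$ for every $j$, which is more than the required $j\in\{1,2,\ell\}$, and $\lvert\Gamma(v_j)\cap\Gamma(v_{j+1})\rvert=1$ along the path. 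For any $u$ with $\gamma(u)=2$, the two bad islands at $u$ meet $u$ in faces lying in two disjoint, mutually non‑adjacent arcs of $(\ge 5)$‑faces around $u$, and two such arcs are separated on each side by a $3$‑face; as $d(u)\le 5$ there are exactly two such $3$‑faces — this is the parenthetical claim.

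For the second assertion, fix $i\in\{1,2,\ell\}$, suppose $v_{i-1}v_{i+1}\in E(G)$, and set $T=v_{i-1}v_iv_{i+1}$. I would split on whether $T$ bounds a face. If it does, the resulting $3$‑face $t^{*}$ is one of the two $3$‑faces at $v_i$, using the edges $v_{i-1}v_i$ and $v_iv_{i+1}$; the other face $F_-$ through $v_{i-1}v_i$ and the other face $F_+$ through $v_iv_{i+1}$ are then $(\ge 5)$‑faces (Proposition~\ref{prop:gamma}), are distinct (two faces cannot share two edges at $v_i$), lie in the two distinct bad islands at $v_i$ (the two arcs separated by $t^{*}$), and, being degree‑$5$ faces incident respectively to $v_{i-1}$ and $v_{i+1}$, have bad islands lying in $\Gamma(v_{i-1})\cap\Gamma(v_i)$ and $\Gamma(v_i)\cap\Gamma(v_{i+1})$, which the first part shows to be singletons for the path‑edges at $v_i$. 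I would then run through the possible cyclic positions of the two path‑edges among the at most five edges at $v_i$, invoking the second $3$‑face at $v_i$ (which, when $d(v_i)=4$, is a triangle on the two neighbours of $v_i$ outside $T$) and the boundary $5$‑cycles of $F_-$ and $F_+$, and in each configuration read off a $4$‑cycle (or a degree‑$2$ vertex) from the forced edges. If $T$ bounds no face, then either the face of $G$ inside the $T$‑wedge at $v_i$ contains both $v_{i-1}v_i$ and $v_iv_{i+1}$ on its boundary — in which case that face has a boundary cycle through $v_{i-1},v_i,v_{i+1}$ with $v_{i-1}v_{i+1}$ a chord, must have degree exactly~$5$ because it lies in an arc around $v_i$, and this forces a $4$‑cycle inside $\mathrm{int}(T)$ — or some edge at $v_i$ points into $\mathrm{int}(T)$, a case treated with the same tools. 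The double‑count is routine; the main obstacle is precisely this case analysis for the second assertion, i.e.\ organising the bookkeeping (position of the two path‑edges, the second $3$‑face, the two forced $5$‑faces, facial vs.\ non‑facial $T$, $d(v_i)\in\{4,5\}$) so that each case yields a forbidden $C_4$. If the preliminary reductions already guarantee that a minimal counterexample has no separating triangle, the non‑facial case disappears and the analysis shortens considerably.
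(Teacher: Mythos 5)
Your treatment of the first assertion is essentially the paper's own argument: the inclusion--exclusion bound $\gamma(V(P))\le\sum_j\gamma(v_j)-\sum_{j}\lvert\Gamma(v_j)\cap\Gamma(v_{j+1})\rvert\le 2\ell-(\ell-1)=\ell+1$ (valid here because $\ell\le 3$), played against $\gamma(V(P))\ge\ell+1$ from the remark preceding the lemma, forces $\gamma(v_j)=2$ for each $j$; the parenthetical about the two separating $3$-faces then follows from $C_4$-freeness around a vertex of degree at most~$5$. That part is correct and matches the paper.

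The second assertion is where there is a genuine gap, and it is not merely unfinished bookkeeping. You plan to show that $v_{i-1}v_{i+1}\in E(G)$ forces a forbidden $C_4$ (or a degree-$2$ vertex) via a case analysis of the rotation at $v_i$, and you explicitly defer that analysis as ``the main obstacle.'' But a triangle $v_{i-1}v_iv_{i+1}$ is perfectly compatible with $C_4$-freeness, and there is no reason to expect every configuration to contain a $4$-cycle; the paper does not argue this way. What actually excludes the configuration is a charge count: assuming $v_{i-1}v_{i+1}\in E(G)$ (and reducing to $i=1$ after disposing of the case $i=2$, $\ell=3$ separately), the paper observes that $\Gamma(v_1)\subseteq\Gamma(\{v_0,v_2\})$ and then recomputes the quantities $\beta$ and $\mu$ of Rule~\ref{rule4} for $\ell=2$ and $\ell=3$, concluding in every case that either $\beta=\emptyset$ or $\mu\ge\frac{2}{3}\lvert\beta\rvert$; hence Rule~\ref{rule4} would have been applied and $P$ would not be defective. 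The contradiction is with the defectiveness of $P$, not with planarity or $C_4$-freeness. Your closing hedge about a ``no separating triangle'' reduction does not help, since no such reduction is established anywhere in the proof. To repair the proposal you would need to replace the planned structural case analysis by the $\beta$/$\mu$ computation (or actually exhibit the claimed $C_4$ in every case, which I do not believe can be done).
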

\begin{proof}
First, suppose that $i\in\{1,2,\ell\}$ is such that $v_i$ is contained in at most one triangle, which means that $\gamma(v_i)\le 1$. Note that if $\ell = 3$, then $\lvert \Gamma(v_1)\cap\Gamma(v_2)\cap\Gamma(v_3)\rvert - \lvert \Gamma(v_1)\cap\Gamma(v_3)\rvert\le 0$. This justifies the second line in the equation below.
$$\begin{array}{rl}
\gamma(V(P)) = &   \lvert \bigcup_{v_j\in V(P)}\Gamma(v_j) \rvert \\
                        \le & \sum_{j\in\{1,2,\ell\}}\gamma(v_j) - \sum_{j\in\{1,\ell-1\}} \lvert \Gamma(v_j)\cap\Gamma(v_{j+1})\rvert\\
                        \le & \sum_{j\in\{1,2,\ell\}\setminus\{i\}} \gamma(v_j) + \gamma(v_i) - (\ell-1) \\
                        \le & 2(\ell - 1) + 1 - \ell + 1 = \ell
\end{array}$$

This means that $P$ satisfies Lemma \ref{lem:rule4}, a contradiction. Note also that this actually implies that each $v_i$ is contained in exactly two bad islands.

Now, suppose that $i\in\{1,2,\ell\}$ is such that $v_{i-1}v_{i+1}\in E(G)$. Note that if $\ell=3$ and $i=2$, then $\gamma(V(P)) = \gamma(\{v_1,v_3\})$, and the island in $\Gamma(v_0)\cap \Gamma(v_1)$ also contains $v_3$. This implies that $\gamma(V(P)) = 3$, contradicting Lemma~\ref{lem:rule4}. 
So suppose, without loss of generality, that $i=1$ and let $b$ be the island containing $v_0v_2$. Note that $\Gamma(v_1)\subseteq \Gamma(\{v_0,v_2\})$; therefore, $\beta = \Gamma(\{v_2,v_\ell\})\setminus \Gamma(\{v_0,v_{\ell+1}\})$. First consider $\ell=2$. If $b$ also contains $v_3$, then $\cbeta \le \lvert\Gamma(v_2)\setminus \{b\}\rvert = 1$, and $\Gamma(v_0)\cap \Gamma(v_3)\neq \emptyset$, which implies $\mu \ge \frac{2}{3}\cbeta$. And if $b$ does not contain $v_3$, then $\Gamma(v_2)\subseteq \Gamma(\{v_0,v_3\})$, in which case $\beta=\emptyset$. Both cases are contradictions. Therefore, suppose that $\ell=3$, and let $B$ denote $\Gamma(\{v_0,v_4\})$. Note that:
$$\begin{array}{rl}
\cbeta = & \lvert \Gamma(\{v_2,v_3\})\setminus B\rvert\\
          = & \lvert (\Gamma(v_2)\setminus B)\cup (\Gamma(v_3)\setminus B)\rvert\\
          \le & \lvert (\Gamma(v_2)\setminus B)\rvert + \lvert (\Gamma(v_3)\setminus B)\rvert \le 2.
\end{array}$$

The last part holds since $b\in \Gamma(v_2)\cap B$, and $\Gamma(v_3)\cap \Gamma(v_4)\neq \emptyset$ (Proposition \ref{prop:gamma}). If $\cbeta\le 1$ we are done since $\mu\ge 1$. Therefore, suppose $\cbeta = 2$, in which case we must have $(\Gamma(v_2)\setminus B)\cap (\Gamma(v_3)\setminus B)=\emptyset$. So, let $b_i\in \Gamma(v_i)\setminus B$, for $i=2$ and $i=3$, and let $b^*\in \Gamma(v_3)\cap\Gamma(v_4)$. Because $\Gamma(v_2)\cap \Gamma(v_3)\neq \emptyset$ and $b_2\neq b_3$, we get $b=b^*$, i.e., $b\in \Gamma(v_0)\cap \Gamma(v_4)$. Therefore, we get $\mu\ge 1+ \frac{2}{3} > \frac{4}{3} = \frac{2}{3}\cbeta$, a contradiction.
\end{proof}

The next lemma is the final step before we can present the last discharging rule. We denote by $\Theta$ the set of bad islands with negative charge, and by $D$ the set of vertices of degree 5 which are contained in some island in $\Theta$.

\begin{lemma}
Let $b\in \Theta$, and $f$ be a face of degree~5 in $b$. Then $f$ contains at least one vertex of $D$ and, if it contains exactly one such vertex, namely $u$, then  $b$ is the only island in $\Theta$ that contains $u$.
\end{lemma}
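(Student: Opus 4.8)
The plan is to prove both assertions by contradiction, and in each case the engine is the observation that membership in $\Theta$ is extremely restrictive. Every transfer of charge into or out of a bad island has size $\frac{2}{3}$, and every bad island starts with charge $-\frac{2}{3}$; moreover, the only rules that remove charge from a bad island are case R3.2 of Rule~\ref{rule3} and Rule~\ref{rule4}, and in each of them the condition under which it is applied, together with Property~\ref{property}, guarantees that the donor islands still have non-negative charge afterwards. Hence any bad island that ever receives charge ends with $\mu_4 \ge 0$, so $\mu_4(b)<0$ forces $b$ to have received nothing at all. In particular $b$ received nothing in Rule~\ref{rule1}, which means that \emph{no vertex incident to a face of $b$ has degree at least $6$}; by Lemma~\ref{lem:degree} every vertex of $f$ then has degree in $\{3,4,5\}$. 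Since $f$ is a face of $b\in\Theta$, every degree-$5$ vertex of $f$ lies in $D$, so part~(1) amounts to exhibiting a vertex of degree $5$ on $f$.

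So suppose, for contradiction, that $f=(v_1,\dots,v_5)$ has all vertices of degree at most $4$; then $d_F(v_i)=2$ for every $i$ by Lemma~\ref{lem:degree}. First I rule out degree-$3$ vertices on $f$: if $d(v_i)=3$, then $v_i$ has a single neighbour outside $f$, so at least one of its two neighbours on $f$ is also an $F$-neighbour of $v_i$ and therefore lies on the heavy subpath of $v_i$; but by Lemma~\ref{lem:forbiddenStructures}(\ref{FS1}) every vertex of that heavy subpath other than $v_i$ has degree $5$, contradicting the fact that the neighbours of $v_i$ on $f$ have degree at most $4$. Hence all five vertices of $f$ have degree exactly $4$. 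Now $\sum_i d_F(v_i)=10$, the edges of $F$ on the $5$-cycle bounding $f$ form a linear forest (so there are at most four of them), and two vertices of $f$ joined by an edge of $F$ lie on a common heavy subpath; since by Lemma~\ref{lem:forbiddenStructures}(\ref{FS2}) a heavy subpath carries at most two degree-$4$ vertices, three consecutive vertices of $f$ joined by edges of $F$ already give a contradiction, forcing the edges of $F$ on the boundary of $f$ to form a matching of size at most $2$. In the remaining configurations the degree-$4$ vertices of $f$ are degree-$<5$ vertices of heavy subpaths handled by Rules~\ref{rule3} and~\ref{rule4}; since $b$ contains such a vertex but contains no vertex of degree at least $6$, and in particular none of the $F$-neighbours of the endpoints of such a subpath, $b$ would fall into the set $\beta$ of Rule~\ref{rule4} (or into the recipient of case R3.1 of Rule~\ref{rule3}), so $b$ receives charge unless the subpath is defective, and in the defective case Lemma~\ref{lem:defectivepath} (together with Proposition~\ref{prop:gamma} and Lemma~\ref{lem:rule4}) forces local structure incompatible with $f\subseteq b$; either way $b\in\Theta$ is contradicted. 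This analysis, sorting out how $F$ meets the pentagon, is the step I expect to be the most technical.

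For part~(2), suppose $D\cap V(f)=\{u\}$ and, for contradiction, that some $b'\in\Theta$ distinct from $b$ also contains $u$. Since $d(u)=5$, $u$ is incident to exactly five faces; two consecutive non-triangular faces around $u$ share an edge and hence lie in the same component of $G^*_5$, so the faces of $b$ and those of $b'$ incident to $u$ must be separated by triangular faces, forcing $u$ into at least two triangles; and since $G$ has no $C_4$, two triangles through $u$ meet only at $u$ (otherwise they span a $4$-cycle), using four distinct neighbours of $u$, so $d(u)=5$ makes $u$ lie in exactly two triangles. One then exploits this rigid local picture together with the facts already in hand: no vertex incident to $b$ or $b'$ has degree at least $6$, the four vertices of $f$ other than $u$ have degree at most $4$ (so, by the argument of part~(1), any degree-$3$ vertex of $f$ is a neighbour of $u$ on $f$), and applying part~(1) to $b'$ yields a pentagon $f'\subseteq b'$ through $u$ whose vertices other than $u$ also have degree at most $4$. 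Checking the few ways the two triangles, the $b$-faces containing $f$, and the $b'$-faces containing $f'$ can be arranged in the rotation of the five faces around $u$ leaves no consistent configuration, which is the desired contradiction. As in part~(1), the obstacle is purely the bookkeeping: every rotation pattern of the five faces at $u$ must be eliminated in turn.
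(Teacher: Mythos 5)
Your setup is sound and matches the paper's in spirit (exclude degrees $\ge 6$ via Rule~\ref{rule1}, reduce part~(1) to exhibiting a degree-$5$ vertex on $f$, then argue by contradiction on the local structure), but both halves of your argument stop exactly where the real work has to happen, and the routes you gesture at for closing them are not the ones that succeed. For part~(1), after reducing to the case where all five vertices of $f$ have degree~$4$, the contradiction does not come from ``$b$ falling into the set $\beta$ of Rule~\ref{rule4}'' nor from an unspecified ``local structure incompatible with $f\subseteq b$.'' The paper's argument is concrete: Lemma~\ref{lem:defectivepath} forces each $v_i$ to lie in exactly two triangles and forces the pentagon to be induced, so each edge $v_iv_{i+1}$ carries a triangle apex $u_i$ and $N(v_i)\subseteq\{v_{i-1},v_{i+1},u_{i-1},u_i\}$. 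Hence \emph{every} $F$-edge at the $v_i$'s lies inside the ten-vertex graph $H=G[\{v_1,\ldots,v_5,u_1,\ldots,u_5\}]$, and since $d_F(v_i)=2$ for all $i$, a short count closes the proof: either the five pairs $E_i$ are pairwise disjoint, giving $10=\lvert V(H)\rvert$ edges of the forest $F$ inside $H$, or some $v_iv_{i+1}\in E(F)$ and Lemmas~\ref{lem:forbiddenStructures} and~\ref{lem:defectivepath} propagate forced $F$-edges around the pentagon until a cycle in $F$ appears. Your observation that the boundary $F$-edges form a matching of size at most two is a fragment of this, but you never reach a contradiction, and the contradiction genuinely comes from the acyclicity of $F$, not from the discharging rules.

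For part~(2) the gap is more than bookkeeping. Applying part~(1) to a face of $b'$ only tells you that that face contains \emph{some} vertex of $D$, which may perfectly well be $u$ itself; nothing in your hypotheses gives a pentagon $f'$ through $u$ whose other vertices have degree at most $4$, so the rotation case analysis you defer has no evident way to terminate. The paper instead reuses the structure built in part~(1): $u=v_5$ has degree $5$ with neighbours $v_1,v_4,u_4,u_5$ and a fifth neighbour $w$, and the two non-triangular faces at $u$ other than $f$ both contain the edge $uw$ (neither can be a triangle without creating a $C_4$ with the adjacent triangles), so they lie in a single island $t$ that is not a single $5$-face. Combined with the fact, also extracted from the part~(1) analysis, that every island of $\Theta$ is a single $5$-face, this gives $t\notin\Theta$, so $b$ is the only $\Theta$-island at $u$. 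That observation about the fifth neighbour $w$ (or, equivalently, the ``$\Theta$-islands are single $5$-faces'' fact it rests on) is the missing idea; without it your plan for part~(2) does not close.
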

\begin{proof}
Let $f=(v_1,\ldots,v_5)$ be such that $v_i$ is contained in some defective path, for each $i\in\{1,\ldots,5\}$. Without loss of generality, suppose that $d(v_i)=4$, for every $i\in\{1,\ldots,4\}$. First, we want to prove that $(v_1,\ldots,v_5)$ is an induced cycle in $G$. So suppose that $v_1v_3\in E(G)$. Since $f$ is a 5-face in $G$, we must have that the edge $v_1v_3$ is traced in the outer side of $f$. Because $\delta(G)\ge 3$, one can verify that this implies that $(v_1,v_2,v_3)$ is not a 3-face in $G$, which in turn implies that $v_1$ is contained in at most one bad island,  contradicting Lemma \ref{lem:defectivepath}. Observe that the same argument can be applied to conclude that $v_iv_j\notin E(G)$, for every $i\in\{1,\ldots,4\}$ and every $j\in \{1,\ldots,5\}\setminus\{i\}$. Now observe that, by Lemma~\ref{lem:defectivepath}, there must exist $u_1,\ldots,u_5$, where $u_5\in N(v_1)\cap N(v_5)$, and $u_i\in N(v_i)\cap N(v_{i+1})$, for each $i\in \{1,\ldots,4\}$. This means that every island in $\Theta$ is a face of degree 5. We claim that $d(v_5) = 5$. Supposing it holds, let $w\in N(v_5)\setminus\{v_1,v_4,u_4,u_5\}$; also let $f_1$ be the face containing $u_4v_5$ different from $(v_4,v_5,u_5)$, and $f_2$ be the face containing $u_5v_5$ different from $(u_5,v_5,v_1)$. Because $G$ has no cycles of length~4, we know that $f_1$ and $f_2$ have degree bigger than~3, and that they share the edge $v_5w$. This means that $f_1$ and $f_2$ are within the same island $t$, which implies that $t\notin\Theta$, and the lemma follows, i.e., $b$ is the only island in $\Theta$ containing $u$. It remains to prove our claim.

Suppose by contradiction that $d(v_5)=4$, and let $H$ denote the induced subgraph $G[\{v_1,\ldots,v_5,u_1,\ldots,u_5\}]$. Because $d_F(v_i)=2$ and $N(v_i)\subseteq V(H)$, for every $i\in \{1,\ldots,5\}$, we know that $H$ must contain every edge in $F$ incident to $\{v_1,\ldots,v_5\}$. For each $v_i$, let $E_i$ denote the set $\{uv_i\in E(F)\}$; we know that $\lvert E_i\rvert =2$. Therefore, if $E_i\cap E_j=\emptyset$, for every $i,j\in \{1,\ldots,5\}$, $i\neq j$, then $\lvert E(H)\cap E(F)\rvert = \lvert\bigcup_{i=1}^5E_i \rvert=\sum_{i=1}^5\lvert E_i\rvert = 10 = \lvert V(H)\rvert$, contradicting the fact that $F$ is acyclic. We can then suppose, without loss of generality, that $v_1v_2\in E(F)$. By Lemmas~\ref{lem:forbiddenStructures} and~\ref{lem:defectivepath}, we get that $\{u_5v_1,u_2v_2\}\subseteq E(F)$. Also, by Lemma \ref{lem:defectivepath}, we get $\lvert \{v_3v_4,v_3u_3\}\cap E(F)\rvert\le 1$ and $\lvert \{v_4v_5,u_4v_5\}\cap E(F)\rvert\le 1$. This implies that $\{u_5v_5,u_2v_3\}\subseteq E(F)$. It is easy to verify that no matter the choice of edges in $E_4$, we get a cycle in $F$, a contradiction.
\end{proof}

The lemma above implies the correctness of our final discharging rule.

\begin{Drule}
Let $K = (D,E)$ be such that $uv\in E$ if and only if $u$ and $v$ are within the same bad island $b\in \Theta$. For each component $K'$ of $K$, apply one of the following:
\begin{enumerate}
\item[R5.1] If $\lvert V(K')\rvert \ge 2$, let $T$ be a spanning tree of $K'$ and let $uv\in E(T)$. Send charge $2/3$ from $\{u,v\}$ to each island in $\Gamma(\{u,v\})$, and for every $w\in V(T)\setminus\{u,v\}$, send charge $2/3$ from $w$ to the island in $\Gamma(w)\setminus\Gamma(w')$, where $w'\in N_T(w)$ separates $w$ from $uv$.

\item[R5.2] If $V(K') = \{u\}$, send $2/3$ from $u$ to the bad island in $\Theta$ containing $u$.
\end{enumerate}
\end{Drule}

%
%

\bibliographystyle{plain}
\bibliography{linearForest-v2}

\end{document}